\newcommand{\diff}[2]{($#1,#2$)}
\newtheorem{lemma}{Lemma}
\newtheorem{theorem}{Theorem}
\title{The Maximum k-Differential Coloring Problem}
\author[1]{Michael A. Bekos%
\thanks{Electronic address: \texttt{bekos@informatik.uni-tuebingen.de}}}
\author[1]{Michael~Kaufmann%
\thanks{Electronic address: \texttt{mk@informatik.uni-tuebingen.de}}}
\author[2]{Stephen~G.~Kobourov%
\thanks{Electronic address: \texttt{kobourov@cs.arizona.edu}}}
\author[2]{Sankar~Veeramoni%
\thanks{Electronic address: \texttt{sankar@cs.arizona.edu}}}
\affil[1]{Wilhelm-Schickard-Institut f\"{u}r Informatik - Universit\"{a}t T\"{u}bingen, Germany}
\affil[2]{Department of Computer Science - University of Arizona, Tucson AZ, USA}
\date{}
\begin{document}
\maketitle

\begin{abstract}
Given an $n$-vertex graph $G$ and two positive integers $d,k \in
\mathbb{N}$, the \diff{d}{kn}-differential coloring problem asks for
a coloring of the vertices of $G$ (if one exists) with distinct
numbers from $1$ to $kn$ (treated as \emph{colors}), such that the
minimum difference between the two colors of any adjacent vertices
is at least $d$. While it was known that the problem of determining
whether a general graph is \diff{2}{n}-differential colorable is
NP-complete, our main contribution is a complete characterization of
bipartite, planar and outerplanar graphs that admit
\diff{2}{n}-differential colorings. For practical reasons, we also
consider color ranges larger than $n$, i.e., $k > 1$. We show that
it is NP-complete to determine whether a graph admits a
\diff{3}{2n}-differential coloring. The same negative result holds
for the \diff{\lfloor2n/3\rfloor}{2n}-differential coloring problem,
even in the case where the input graph is planar.
\end{abstract}

\section{Introduction}
\label{intro}
Several methods for visualizing relational datasets use a map
metaphor where objects, relations between objects and clusters are
represented as cities, roads and countries, respectively. Clusters
are usually represented by colored regions, whose boundaries are
explicitly defined. The $4$-coloring theorem states that four colors
always suffice to color any map such that neighboring countries have
distinct colors. However,  if not all countries of the map are
contiguous and the countries are not colored with unique colors, it
would be impossible to distinguish whether two regions with the same
color belong to the same country or to different countries. In order
to avoid such ambiguity, this necessitates the use of a unique color
for each country; see Fig.~\ref{fig:teaser}.

\begin{figure}[t!]
    \centering
    \begin{minipage}[b]{.49\textwidth}
        \centering
        \subfloat[\label{fig:teaser1}{Colored with random assignment of colors}]
        {\includegraphics[width=\textwidth]{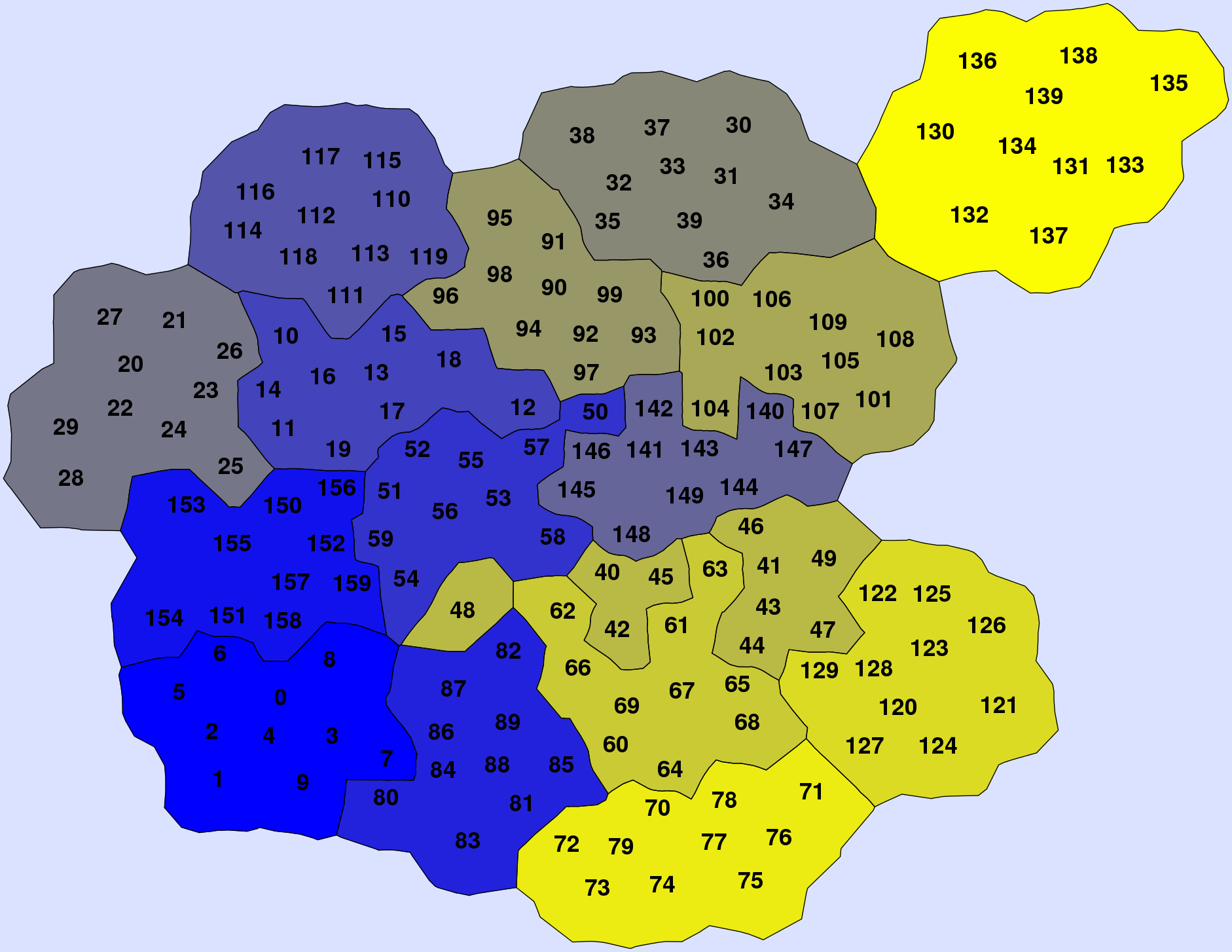}}
    \end{minipage}
    \begin{minipage}[b]{.49\textwidth}
        \centering
        \subfloat[\label{fig:teaser2}{Colored with max. differential coloring}]
        {\includegraphics[width=\textwidth]{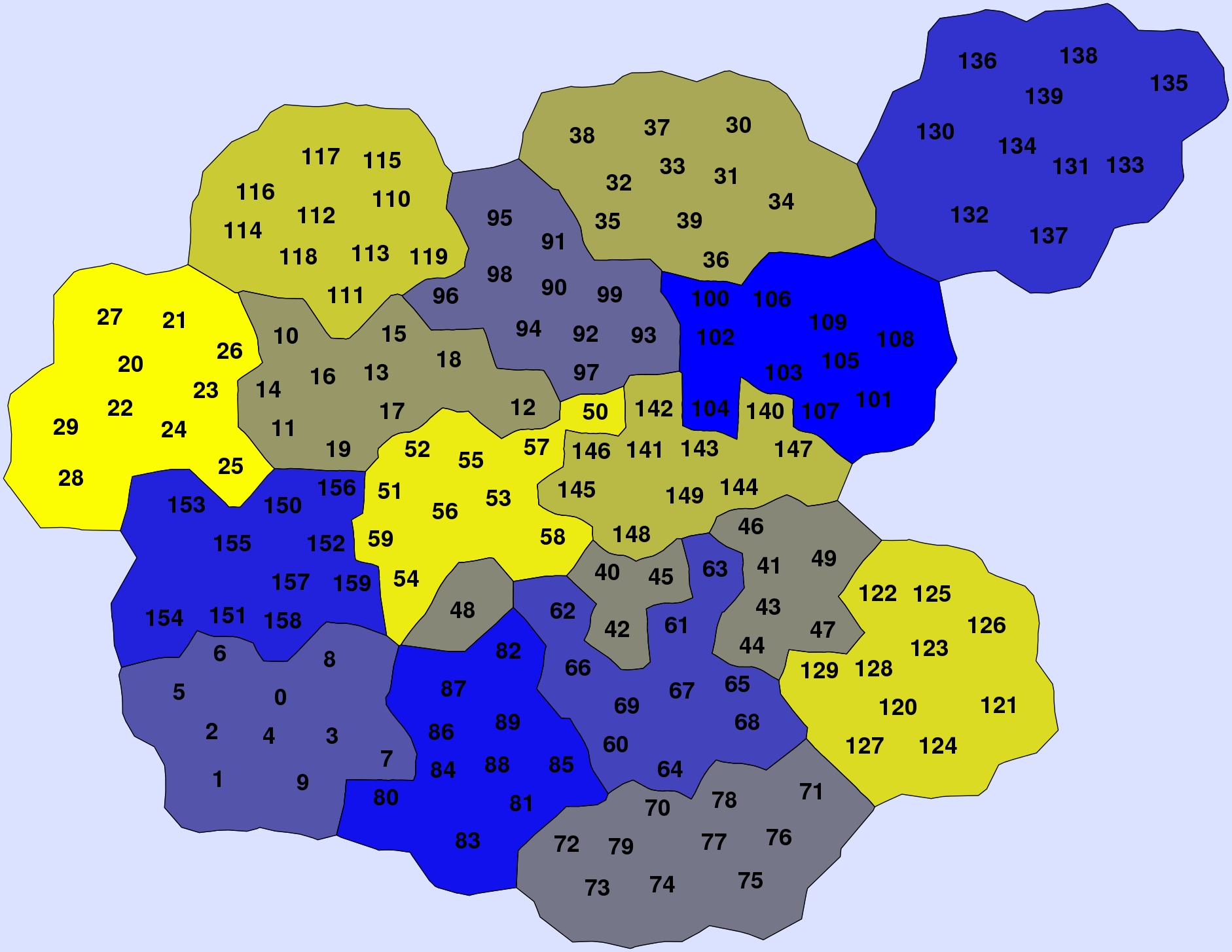}}
    \end{minipage}
    \caption{Illustration of a map colored using the same set of
    colors obtained by the linear interpolation of blue and yellow. There is one country
    in the middle containing the vertices 40-49 which is fragmented into three small regions.}
    \label{fig:teaser}
\end{figure}

However, it is not enough to just assign different colors to each
country. Although human perception of color is good and thousands of
different colors can be easily distinguished, reading a map can be
difficult due to color constancy and color context
effects~\cite{purves2003we}. Dillencourt et al.~\cite{GE07} define a
good coloring as one in which the colors assigned to the countries
are visually distinct while also ensuring that the colors assigned
to adjacent countries are as dissimilar as possible. However, not
all colors make suitable choices for coloring countries and a
``good'' color palette is often a gradation of certain map-like
colors~\cite{ColorBrewer2}. In more restricted scenarios, e.g., when
a map is printed in gray scale, or when the countries in a given
continent must use different shades of a predetermined color, the
color space becomes 1-dimensional.

This 1-dimensional fragmented map coloring problem is nicely
captured by the \emph{maximum differential coloring
problem}~\cite{abktree,leung1984,miller89,yixun2003dual}, which we
slightly generalize in this paper: Given a map, define the
\emph{country graph} $G=(V,E)$ whose vertices represent countries,
and two countries are connected by an edge if they share a
non-trivial geographic boundary. Given two positive integers $d,k
\in \mathbb{N}$, we say that $G$ is \diff{d}{kn}-differential
colorable if and only if there is a coloring of the $n$ vertices of
$G$ with distinct numbers from $1$ to $kn$ (treated as
\emph{colors}), so that the \emph{minimum color distance} between
adjacent vertices of $G$ is at least $d$. The \emph{maximum
k-differential coloring} problem asks for the largest value of $d$,
called the \emph{k-differential chromatic number} of $G$, so that
$G$ is \diff{d}{kn}-differential colorable. Note that the
traditional \emph{maximum differential coloring problem} corresponds
to $k=1$.

A natural reason to study the maximum k-differential coloring
problem for $k>1$ is that using more colors can help produce maps
with larger differential chromatic number. Note, for example, that a
star graph on $n$ vertices has 1-differential chromatic number (or
simply \emph{differential chromatic number}) one, whereas its
2-differential chromatic number is $n+1$. That is, by doubling the
number of colors used, we can improve the quality of the resulting
coloring by a factor of $n$. This is our main motivation for
studying the maximum k-differential coloring problem for $k>1$.

\subsection{Related Work}
The maximum differential coloring problem is a well-studied problem,
which dates back in 1984, when Leung et al.~\cite{leung1984}
introduced it under the name ``separation number'' and showed its
NP-completeness. It is worth mentioning though that the maximum
differential coloring problem is also known as ``dual
bandwidth''~\cite{yixun2003dual} and
``anti-bandwidth''~\cite{abktree}, since it is the complement of the
{\em bandwidth minimization problem}~\cite{Papadimitriou_1975}. Due
to the hardness of the problem, heuristics are often used for
coloring general graphs, e.g., LP-formulations~\cite{grasp}, memetic
algorithms~\cite{memetic} and spectral based methods~\cite{hu2010}.
The differential chromatic number is known only for special graph
classes, such as Hamming graphs~\cite{abhamming},
meshes~\cite{abrsstv}, hypercubes~\cite{abrsstv,Wang20091947},
complete binary trees~\cite{weili}, complete $m$-ary trees for odd
values of $m$~\cite{abktree}, other special types of
trees~\cite{weili}, and complements of interval graphs, threshold
graphs and arborescent comparability graphs~\cite{Isaak98powersof}.
Upper bounds on the differential chromatic number are given by Leung
et al.~\cite{leung1984} for connected graphs and by Miller and
Pritikin~\cite{miller89} for bipartite graphs. For a more detailed
bibliographic overview refer to~\cite{arxivplanar}.

\subsection{Our Contribution}
In Section~\ref{sec:preliminaries}, we present preliminary
properties and bounds on the $k$-differential chromatic number. One
of them guarantees that any graph is \diff{1}{n}-differential
colorable; an arbitrary assignment of distinct colors to the
vertices of the input graph guarantees a minimum color distance of
one (see Lemma~\ref{lem:prilim1}). So, the next reasonable question
to ask is whether a given graph is \diff{2}{n}-differential
colorable. Unfortunately, this is already an NP-complete problem
(for general graphs), since a graph is \diff{2}{n}-differential
colorable if and only if its complement has a Hamiltonian
path~\cite{leung1984}. This motivates the study of the
\diff{2}{n}-differential coloring problem for special classes of
graphs. In Section~\ref{sec:difflabel21}, we present a complete
characterization of bipartite, outer-planar and planar graphs that
admit \diff{2}{n}-differential colorings.

In Section~\ref{sec:gennpcom}, we double the number of available
colors. As any graph is \diff{2}{2n}-differential colorable (due to
Lemma~\ref{lem:prilim1}; Section~\ref{sec:preliminaries}), we study
the \diff{3}{2n}-differential coloring problem and we prove that it
is NP-complete for general graphs (Theorem~\ref{thm:reduction};
Section~\ref{sec:gennpcom}). We also show that testing whether a
given graph is \diff{k+1}{kn}-differential colorable is NP-complete
(Theorem~\ref{thm:reduction3}; Section~\ref{sec:gennpcom}). On the
other hand, all planar graphs are \diff{\lfloor
n/3\rfloor+1}{2n}-differential colorable (see
Lemma~\ref{lem:mcolorable}; Section~\ref{sec:preliminaries}) and
testing whether a given planar graph is
\diff{\lfloor2n/3\rfloor}{2n}-differential colorable is shown to be
NP-complete (Theorem~\ref{thm:np-proof};
Section~\ref{sec:gennpcom}). In Section~\ref{sec:ILP}, we provide a
simple ILP-formulation for the maximum k-differential coloring
problem and experimentally compare the optimal results obtained by
the ILP formulation for $k=1$ and $k=2$ with  GMap, which is a
heuristic based on spectral methods developed by Hu et
al.~\cite{Gansner_Hu_Kobourov_2009_gmap}. We conclude in
Section~\ref{sec:conclusion} with open problems and future work.

\section{Preliminaries}
\label{sec:preliminaries}
The maximum k-differential coloring problem can be easily reduced to
the ordinary differential coloring problem as follows: If $G$ is an
$n$-vertex graph that is input to the maximum k-differential
coloring problem, create a disconnected graph $G'$ that contains all
vertices and edges of $G$ plus $(k-1) \cdot n$ isolated vertices.
Clearly, the k-differential chromatic number of $G$ is equal to the
1-differential chromatic number of $G'$. A drawback of this
approach, however, is that few results are known for the ordinary
differential coloring problem, when the input is a disconnected
graph. In the following, we present some immediate upper and lower
bounds on the $k$-differential chromatic number for connected
graphs.

\begin{lemma}
The $k$-differential chromatic number of a connected graph is at
least $k$.
\label{lem:prilim1}
\end{lemma}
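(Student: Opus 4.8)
The plan is to exhibit an explicit valid \diff{k}{kn}-differential coloring, which by definition of the $k$-differential chromatic number immediately gives the claimed lower bound of $k$. Since the bound does not depend on the edge structure at all (only on the number of vertices and the size of the color range), connectivity will not actually be used; it is inherited from the ambient assumption on the graphs considered in this section.

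First I would restrict attention to the $n$ colors that are multiples of $k$, namely $k, 2k, 3k, \ldots, nk$. All of these lie in the range $\{1, 2, \ldots, kn\}$, they are pairwise distinct, and any two distinct such values differ by at least $k$. Next I would assign these $n$ colors to the $n$ vertices of $G$ via an arbitrary bijection; since distinct vertices receive distinct colors, this is a legal coloring in the sense required by the \diff{\cdot}{\cdot}-differential coloring problem.

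Finally, for any edge $uv \in E$, the colors of $u$ and $v$ are two distinct multiples of $k$, so their difference has absolute value at least $k$. Hence the minimum color distance over all edges is at least $k$, so $G$ is \diff{k}{kn}-differential colorable, and therefore its $k$-differential chromatic number is at least $k$.

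There is essentially no obstacle here: the only thing to double-check is the off-by-one bookkeeping on the color range, i.e.\ that $nk \le kn$ holds (trivially) so that the color $nk$ is admissible, and that we have exactly enough multiples of $k$ in $\{1,\ldots,kn\}$ to color all $n$ vertices. I would state this as a one-line remark rather than a displayed computation.
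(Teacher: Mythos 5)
Your proof is correct and is essentially the same as the paper's: both assign the $n$ vertices distinct colors from $\{k, 2k, \ldots, kn\}$ arbitrarily, so any two distinct colors differ by at least $k$. Your added remark that connectivity is never actually used is accurate and a fair observation about the statement's hypotheses.
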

\begin{proof}
Let $G$ be a connected graph on $n$ vertices. It suffices to prove
that $G$ is \diff{k}{kn}-differential colorable. Indeed, an
arbitrary assignment of distinct colors from the set $\{k, 2k,
\ldots, kn\}$ to the vertices of $G$ guarantees a minimum color
distance of $k$.
\end{proof}

\begin{lemma}
The $k$-differential chromatic number of a connected graph $G =
(V,E)$ on $n$ vertices is at most $ \lfloor \frac{n}{2} \rfloor +
(k-1)n$.
\label{lem:prilim2}
\end{lemma}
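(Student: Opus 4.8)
The plan is to bound the $k$-differential chromatic number by exhibiting, for any connected graph $G$ on $n$ vertices, an edge whose two endpoints are forced to receive colors that are close together in any valid coloring. Fix an optimal \diff{d}{kn}-differential coloring, so the colors used form an $n$-element subset $S$ of $\{1,2,\dots,kn\}$. The key observation is that $S$ cannot be too ``spread out'': since $|S| = n$ and the colors lie in an interval of length $kn-1$, a pigeonhole/averaging argument shows there must be two colors of $S$ that differ by at most roughly $\frac{(k-1)n + (\text{something})}{n-1}$. But that alone is not enough — those two close colors need not be assigned to adjacent vertices. So the real work is to use connectivity.

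First I would pick a specific structural feature of $S$. Sort the colors assigned to the vertices as $c_1 < c_2 < \dots < c_n$. Consider the ``small gap'' between consecutive colors $c_i$ and $c_{i+1}$; the sum of all $n-1$ gaps is $c_n - c_1 \le kn-1$, so the smallest gap is at most $\lfloor (kn-1)/(n-1)\rfloor$, which for the range we want is about $k + (k-1)/(n-1)$ — still not obviously $\le \lfloor n/2\rfloor + (k-1)n$ in a useful way. The better approach is to split the color interval into a ``low half'' $\{1,\dots,\lfloor kn/2\rfloor\}$ and a ``high half'' and count how many vertices fall in each; then use that $G$ is connected to find an edge crossing from the low-color side to the high-color side, or more refined, to find an edge between two vertices whose colors are both near the median. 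Concretely: let $v$ be the vertex colored with the $\lceil n/2\rceil$-th smallest color and $w$ the vertex with the $(\lceil n/2\rceil+1)$-th smallest color; since $G$ is connected there is a path from $v$ to $w$, and along that path the colors must at some step jump across the median cut, but I actually want to bound $d$ directly.

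The cleanest route, which I expect to be the one that works: among the $n$ colors, some two \emph{consecutive-in-value} colors $a < b$ must be assigned to \emph{adjacent} vertices — this is false in general, so instead use that a spanning tree $T$ of $G$ has $n-1$ edges, and the multiset of ``color differences along edges of $T$'' must, when we think of the tree as laid out on the color line, force at least one edge to have small stretch. Precisely, root $T$ and note the colors partition $\{1,\dots,kn\}$; an edge of $T$ joining the vertex of $i$-th smallest color to one of $j$-th smallest color ($i<j$) contributes, and since the tree is connected the ``intervals between consecutive used colors'' each get crossed by at least one tree edge. There are $n-1$ used-color gaps summing to $\le kn-1$ and $n-1$ tree edges, and a counting argument (each tree edge crossing a set of consecutive gaps, total crossings bounded) yields that some tree edge has color difference at most $\lfloor n/2 \rfloor + (k-1)n$. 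The main obstacle is making this last counting step rigorous: one must argue that the tree edges cannot all ``straddle'' large chunks of the color line simultaneously, which is where the factor $\lfloor n/2\rfloor$ rather than $n$ comes from — likely by pairing each edge with the cut it induces on $T$ and using that the smaller side of a balanced-ish cut has $\le \lfloor n/2\rfloor$ vertices, hence spans $\le \lfloor n/2\rfloor + (k-1)n$ colors. I would verify the star graph example from the introduction (differential chromatic number $n+1$ when $k=2$, and $\lfloor n/2\rfloor + n$ for the bound) to make sure the constant is tight or at least consistent.
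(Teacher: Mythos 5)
There is a genuine gap: what you have written is a survey of candidate strategies rather than a proof, and the strategy you finally commit to (spanning tree, gap-crossing count, edge-induced cuts) has its decisive step explicitly left open --- and that step does not hold as stated. That the smaller side of the cut induced by a tree edge has at most $\lfloor n/2\rfloor$ vertices tells you nothing about which colors those vertices carry: $\lfloor n/2\rfloor$ vertices can receive colors spread across the entire range $\{1,\dots,kn\}$, so the inference ``has $\le\lfloor n/2\rfloor$ vertices, hence spans $\le\lfloor n/2\rfloor+(k-1)n$ colors'' is a non sequitur; and even a bound on the color span of one side of the cut would not by itself bound the difference across the single edge crossing it. Your earlier ``low half / high half'' idea also fails as stated: for $k=2$ the low half $\{1,\dots,n\}$ may contain no used color at all (all $n$ vertices colored from $\{n+1,\dots,2n\}$), so no edge need cross the midpoint.

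The argument the paper uses is far simpler and needs none of this machinery. Consider the window $W=[\lceil n/2\rceil,\;\lceil n/2\rceil+(k-1)n]$. It contains $(k-1)n+1$ colors, while only $kn-n=(k-1)n$ colors go unused; by pigeonhole some vertex $v$ receives a color in $W$. Since $G$ is connected, $v$ has a neighbor $u$, and whatever color $u$ receives, $|l(u)-l(v)|\le\max\bigl(kn-\lceil n/2\rceil,\;\lceil n/2\rceil+(k-1)n-1\bigr)=\lfloor n/2\rfloor+(k-1)n$. The point you were missing is that one does not need to find an edge between two \emph{specific} close colors; it suffices to find one vertex forced into the middle of the color range, after which \emph{any} of its neighbors witnesses the bound.
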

\begin{proof}
The proof is a straightforward generalization of the proof of Yixun
and Jinjiang~\cite{yixun2003dual} for the ordinary maximum
differential coloring problem. One of the vertices of $G$ has to be
assigned with a color in the interval $[\lceil \frac{n}{2}
\rceil,\lceil \frac{n}{2} \rceil + (k-1)n]$, as the size of this
interval is $(k-1)n + 1$ and there can be only $(k-1)n$ unassigned
colors. Since $G$ is connected, that vertex must have at least one
neighbor which (regardless of its color) would make the difference
along that edge at most $kn - \lceil \frac{n}{2} \rceil = \lfloor
\frac{n}{2} \rfloor + (k-1)n$.
\end{proof}

\begin{lemma}
The $k$-differential chromatic number of a connected $m$-colorable
graph $G = (V,E)$ on $n$ vertices is at least
$\lfloor\frac{(k-1)n}{m-1}\rfloor + 1$.
\label{lem:mcolorable}
\end{lemma}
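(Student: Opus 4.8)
The plan is to prove the bound constructively by exhibiting a single explicit coloring that attains it. Since $G$ is connected it has an edge (the case $n=1$, $m=1$ being degenerate), so we may assume $m\ge 2$. Fix a proper $m$-coloring of $G$, i.e.\ a partition of $V$ into independent sets $V_1,\ldots,V_m$, and write $a_i=|V_i|$, so $\sum_{i=1}^m a_i=n$. Put $g=\lfloor\frac{(k-1)n}{m-1}\rfloor$; this will be the size of a ``buffer'' of unused colors inserted between consecutive color classes, and the whole point of having $(k-1)n$ spare colors is to afford one such buffer per gap.

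Next I would lay the classes out as consecutive blocks along the color axis, separated by gaps of size $g$. Concretely, set $s_1=1$ and $s_{i+1}=s_i+a_i+g$, and assign to the $a_i$ vertices of $V_i$, in arbitrary order, the distinct colors $s_i,s_i+1,\ldots,s_i+a_i-1$. All $n$ assigned colors are distinct, and the largest one used is $s_m+a_m-1=\sum_{i=1}^m a_i+(m-1)g=n+(m-1)g\le n+(k-1)n=kn$, where the inequality is exactly $(m-1)g\le(k-1)n$. Hence this is a legitimate assignment of distinct colors from $\{1,\ldots,kn\}$.

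It remains to check the distance requirement on every edge. Take an edge $uv$ with $u\in V_i$, $v\in V_j$; since each $V_\ell$ is independent we have $i\ne j$, say $i<j$. Then $c(u)\le s_i+a_i-1$ and $c(v)\ge s_j$, and unrolling the recurrence $s_{i+1}=s_i+a_i+g$ gives
\[
c(v)-c(u)\;\ge\; s_j-s_i-a_i+1\;=\;(j-i)\,g+\sum_{\ell=i+1}^{j-1}a_\ell+1\;\ge\; g+1,
\]
since $j-i\ge 1$ and the $a_\ell$ are nonnegative. Therefore the minimum color distance over all edges of $G$ is at least $g+1=\lfloor\frac{(k-1)n}{m-1}\rfloor+1$, which is the claimed lower bound on the $k$-differential chromatic number.

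I do not expect a genuine obstacle: the only thing to watch is that gaps between non-consecutive blocks accumulate, so the binding case is an edge joining two classes that are \emph{adjacent} in the chosen ordering, which is precisely why one buffer of size $g$ per gap is enough. As a sanity check, for the star $K_{1,n-1}$ we have $m=2$ and the bound yields $(k-1)n+1$, consistent with the value $n+1$ claimed for its $2$-differential chromatic number.
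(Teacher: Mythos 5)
Your proof is correct and is essentially the paper's own argument: both fix a proper $m$-coloring and assign each color class a consecutive block of colors, with a buffer of $\lfloor\frac{(k-1)n}{m-1}\rfloor$ unused colors between consecutive blocks. You simply spell out the verification (that the largest color stays within $kn$ and that every edge spans at least one buffer) that the paper leaves implicit.
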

\begin{proof}
Let $C_i \subseteq V$ be the set of vertices of $G$ with color $i$
and $c_i$ be the number of vertices with color $i$, $i = 1, \ldots,
m$. We can show that $G$ is \diff{\lfloor\frac{(k-1)n}{m-1}\rfloor +
1}{kn}-differential colorable by coloring the vertices of $C_i$ with
colors from the following set: $ [~(\sum_{j = 1}^{i-1}c_j) + 1 +
(i-1)\lfloor\frac{(k-1)n}{m-1}\rfloor,~(\sum_{j = 1}^{i}c_j) +
(i-1)\lfloor\frac{(k-1)n}{m-1}\rfloor~] $
\end{proof}

\section{The (2,n)-Differential Coloring Problem}
\label{sec:difflabel21}

In this section, we provide a complete characterization of
\begin{inparaenum}[(i)]
\item bipartite graphs,
\item outerplanar graphs and
\item planar graphs
\end{inparaenum}
that admit \diff{2}{n}-differential coloring. Central to our
approach is a result of Leung et al.~\cite{leung1984} who showed
that a graph $G$ has \diff{2}{n}-differential coloring if and only
if the complement $G^c$ of $G$ is Hamiltonian. As a consequence, if
the complement of $G$ is disconnected, then $G$ has no
\diff{2}{n}-differential coloring.

In order to simplify our notation scheme, we introduce the notion of
\emph{ordered differential coloring} (or simply \emph{ordered
coloring}) of a graph, which is defined as follows. Given a graph
$G=(V,E)$ and a sequence $S_1 \rightarrow S_2 \rightarrow \ldots
\rightarrow S_k$ of $k$ disjoint subsets of $V$, such that $\cup_{i
= 1}^{k}S_i = V$, an \emph{ordered coloring} of $G$ implied by the
sequence $S_1 \rightarrow S_2 \rightarrow \ldots \rightarrow S_k$ is
one in which the vertices of $S_i$ are assigned colors from
$(\sum_{j = 1}^{i-1}|S_j|) + 1$ to $\sum_{j = 1}^{i}|S_j|$, $i=1, 2,
\ldots, k$.

\begin{theorem}
A bipartite graph admits a \diff{2}{n}-differential coloring if and
only if it is not a complete bipartite graph.
\label{lem:bipartite}
\end{theorem}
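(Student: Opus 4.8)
The plan is to translate the statement into a question about Hamiltonian paths in the complement, using the characterization of Leung et al.~\cite{leung1984} quoted above: a graph $G$ on $n$ vertices admits a \diff{2}{n}-differential coloring if and only if $G^c$ has a Hamiltonian path. Thus it suffices to prove that, for a bipartite graph $G$, the complement $G^c$ has a Hamiltonian path if and only if $G$ is not a complete bipartite graph.

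For the ``only if'' direction (equivalently, complete bipartite implies no coloring), I would simply observe that if $G = K_{a,b}$ with $a,b \ge 1$, then $G^c$ is the disjoint union of $K_a$ and $K_b$, hence disconnected; a disconnected graph has no Hamiltonian path, so $G$ admits no \diff{2}{n}-differential coloring. This is exactly the ``consequence'' already noted in the text (disconnected complement implies no coloring), applied to complete bipartite graphs.

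For the ``if'' direction, fix a bipartition $(A,B)$ of $G$ with $A,B \neq \emptyset$ (the cases $n \le 1$ being trivial). Since $G$ is bipartite, $A$ and $B$ are independent sets of $G$, so each of $A$ and $B$ induces a clique in $G^c$. Since $G$ is not complete bipartite, there exist $u \in A$ and $v \in B$ with $uv \notin E(G)$, i.e.\ $uv \in E(G^c)$. I would then build a Hamiltonian path of $G^c$ explicitly: list the vertices of $A$ in any order ending at $u$ (possible because $A$ is a clique in $G^c$), then traverse the edge $uv$, then list the vertices of $B$ in any order starting at $v$ (possible because $B$ is a clique in $G^c$). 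Concatenating these gives a Hamiltonian path of $G^c$, and hence, by the result of Leung et al.~\cite{leung1984}, $G$ admits a \diff{2}{n}-differential coloring.

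The argument has no substantial obstacle; the only points requiring a little care are the degenerate cases ($|A| = 1$ or $|B| = 1$, or $n$ very small), where one should check that the concatenated sequence is still a valid Hamiltonian path, and invoking the correct form of the Leung et al.\ characterization — Hamiltonian \emph{path}, not Hamiltonian cycle. The core idea, gluing a clique-path through $A$, the crossing non-edge $uv$, and a clique-path through $B$, is otherwise immediate.
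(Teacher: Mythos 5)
Your proposal is correct and is essentially the paper's own argument: the paper also uses the disconnected-complement observation for the complete bipartite case, and for the other direction it constructs the ordered coloring $V_1 \setminus \{u\} \rightarrow \{u\} \rightarrow \{v\} \rightarrow V_2 \setminus \{v\}$, which is exactly your Hamiltonian path of $G^c$ read off in color order. The only cosmetic difference is that the paper writes down the coloring directly rather than passing explicitly through the Leung et al.\ Hamiltonian-path equivalence.
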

\begin{proof}
Let $G = (V,E)$ be an $n$-vertex bipartite graph, with $V = V_1\cup
V_2$, $V_1 \cap V_2 = \emptyset$ and $E \subseteq V_1 \times V_2$.
If $G$ is a complete bipartite graph, then its complement is
disconnected. Therefore, $G$ does not admit a
\diff{2}{n}-differential coloring. Now, assume that $G$ is not
complete bipartite. Then, there exist at least two vertices, say $u
\in V_1$ and $v \in V_2$, that are not adjacent, i.e., $(u,v) \notin
E$. Consider the ordered coloring of $G$ implied by the sequence
$V_1 \setminus \{u\}\rightarrow \{u\} \rightarrow \{v\} \rightarrow
V_2 \setminus \{v\}$. As $u$ and $v$ are not adjacent, it follows
that the color difference between any two vertices of $G$ is at
least two. Hence, $G$ admits a \diff{2}{n}-differential coloring.
\end{proof}

\begin{lemma}
An outerplanar graph with $n \geq 6$ vertices, that does not contain
$K_{1,n-1}$ as a subgraph, admits a $3$-coloring, in which each
color set contains at least $2$ vertices.
\label{lem:outerplanarcoloring}
\end{lemma}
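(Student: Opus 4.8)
The plan is to produce the desired $3$-coloring by exploiting structural facts about outerplanar graphs — in particular, that they are $3$-colorable, sparse (at most $2n-3$ edges), and have a vertex of degree at most $2$. The main point is not the existence of \emph{a} $3$-coloring (which is classical), but the existence of a \emph{balanced} one in which no color class is a singleton (or empty). So the strategy is: start from an arbitrary proper $3$-coloring with color classes $C_1, C_2, C_3$, and then repair it by local recoloring whenever some class is too small.

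First I would dispose of the degenerate cases. If one color class, say $C_3$, is empty, then $G$ is bipartite; since $G$ has no $K_{1,n-1}$ and $n\ge 6$, one shows directly (e.g.\ as in the proof of Theorem~\ref{lem:bipartite}, or just by moving vertices between the two sides using a non-edge) that the two sides can be re-split into three parts each of size at least $2$. If some class is a singleton, say $C_3=\{v\}$ with $\deg(v)\le 2$ — which we may assume, since outerplanar graphs always contain a vertex of degree at most $2$, and after deleting it the remaining graph is still outerplanar, letting us set up an induction or a careful case analysis — then $v$ has at most two neighbors, so at least one of $C_1,C_2$ contains a vertex not adjacent to $v$, and we can move such a vertex into $C_3$, provided the class it leaves still has size at least $2$. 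Iterating this balancing argument, together with a counting bound to guarantee that the classes we draw from are large enough (if two classes were each of size $\le 2$ while the third is a singleton, then $n\le 5$, contradicting $n\ge 6$), yields the claim.

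The key steps, in order, are: (1) invoke $3$-colorability of outerplanar graphs to get classes $C_1,C_2,C_3$; (2) handle the case where some class is empty by a bipartite re-partitioning argument that uses the absence of $K_{1,n-1}$; (3) handle the case where some class is a singleton $\{v\}$: use a low-degree vertex (or the bounded degree forced by sparsity) to find a non-neighbor in a sufficiently large class and move it to $\{v\}$; (4) check, via $n\ge 6$ and a short count, that this repair can always be performed without creating a new undersized class, so the process terminates with all classes of size at least $2$.

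The hard part will be step~(2)–(3): ensuring that the rebalancing never gets stuck, i.e.\ that whenever one class is too small there is genuinely a vertex in another class that (a) is not forced to stay by adjacency and (b) can be removed without shrinking its own class below $2$. This is where the two hypotheses — $n\ge 6$ and no $K_{1,n-1}$ subgraph — both get used, and where one must be careful with the interaction between the "too small'' class and the degrees of the vertices being shuffled; I expect the argument to proceed by choosing the starting $3$-coloring to be extremal (e.g.\ maximizing the size of the smallest class, or minimizing the number of singleton classes) so that any failure of the repair step immediately contradicts either extremality or the hypotheses.
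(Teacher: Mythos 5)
There is a genuine gap in the step that handles a singleton color class. You write that if $C_3=\{v\}$ is a singleton we ``may assume'' $\deg(v)\le 2$ because every outerplanar graph has a vertex of degree at most~$2$; but that low-degree vertex need not be the one sitting in the singleton class, and the whole difficulty of the lemma is concentrated in the opposite situation. Concretely, the hard configuration is $c_1=1$, $c_2=2$, $c_3\ge 3$ with $C_1=\{a\}$ and $a$ adjacent to \emph{every} vertex of $C_3$ (this is consistent with $K_{1,n-1}$-freeness, since $a$ may simply miss a vertex of $C_2$). Here no vertex of the large class can be moved to color~$1$, so your repair step is stuck, and the extremal choice of the initial coloring does not rescue it: the obstruction is structural, not an artifact of a bad starting coloring. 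The paper's proof resolves exactly this case by a two-vertex swap: $K_{1,n-1}$-freeness gives a vertex $b\in C_2$ with $(a,b)\notin E$, and the fact that an outerplanar graph is $K_{2,3}$-free (so $a$ and $b'$ have at most two common neighbors) gives a vertex $c\in C_3$ with $(b',c)\notin E$; then $b$ is recolored~$1$ and $c$ is recolored~$2$. The same $K_{2,3}$-freeness is also what handles the two-singleton case $c_1=c_2=1$, $c_3\ge 4$ (the two singletons have at most two common neighbors among the $\ge 4$ vertices of $C_3$), which your write-up does not address. Neither use of outerplanarity beyond $3$-colorability and ``a vertex of degree $\le 2$ exists'' appears in your plan, and some such ingredient is indispensable: for general $3$-colorable graphs the statement is false (e.g.\ a wheel-like graph where two adjacent hubs dominate everything forces two singleton classes).

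The induction fallback you sketch does not close the gap either. If you delete a degree-$\le 2$ vertex $u$, the graph $G-u$ may acquire a $K_{1,n-2}$ subgraph even though $G$ had no $K_{1,n-1}$ (a vertex adjacent to everything except $u$), and the vertex count drops toward the $n\ge 6$ threshold, so the induction hypothesis is not available; the base cases would then carry all the difficulty. Your handling of the empty-class and the ``singleton with genuinely low degree'' subcases is fine as far as it goes, but the central case of a high-degree singleton vertex is missing, and that is where both hypotheses of the lemma are actually consumed.
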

\begin{proof}
Let $G=(V,E)$ be an outerplanar graph with $n \geq 6$ vertices, that
does not contain $K_{1,n-1}$ as a subgraph. As $G$ is outerplanar,
it admits a $3$-coloring~\cite{PS86}. Let $C_i \subseteq V$ be the
set of vertices of $G$ with color $i$ and $c_i$ be the number of
vertices with color $i$, that is $c_i=|C_i|$, for $i=1, 2, 3$.
Without loss of generality let $c_1 \le c_2 \le c_3$. We further
assume that each color set contains at least one vertex, that is
$c_i \geq 1$, $i=1,2,3$. If there is no set with less than $2$
vertices, then the lemma clearly holds. Otherwise, we distinguish
three cases:

\begin{enumerate}
\item[Case 1:]
$c_1 = c_2 = 1$ and $c_3 \ge 4$. Without loss of generality assume
that $C_1 = \{a\}$ and $C_2 = \{b\}$. As $G$ is outerplanar,
vertices $a$ and $b$ can have at most $2$ common neighbors. On the
other hand, since $G$ has at least $6$ vertices, there exists at
least one vertex, say $c \in C_3$, which is not a common neighbor of
$a$ and $b$. Without loss of generality assume that $(b,c) \notin
E$. Then, vertex $c$ can be colored with color $2$. Therefore, we
derive a new $3$-coloring of $G$ for which we have that $c_1 = 1$,
$c_2 = 2$ and $c_3 \ge 3$.

\item[Case 2:]
$c_1 = 1$, $c_2 = 2$ and $c_3 \ge 3$: Without loss of generality
assume that $C_1 = \{a\}$ and $C_2 = \{b,b'\}$. First, consider the
case where there exists at least one vertex, say $c \in C_3$, which
is not a neighbor of vertex $a$. In this case, vertex $c$ can be
colored with color $1$ and a new $3$-coloring of $G$ is derived with
$c_1 = c_2 = 2$ and $c_3 \ge 3$, as desired. Now consider the more
interesting case, where vertex $a$ is a neighbor of all vertices of
$C_3$. As $G$ does not contain $K_{1,n-1}$ as a subgraph, either
vertex $b$ or vertex $b'$ is not a neighbor of vertex $a$. Without
loss of generality let that vertex be $b$, that is $(a,b) \notin E$.
As $G$ is outerplanar, vertices $a$ and $b'$ can have at most $2$
common neighbors. Since $G$ has at least $6$ vertices and vertex $a$
is a neighbor of all vertices of $C_3$, there exist at least one
vertex, say $c \in C_3$, which is not adjacent to vertex $b'$, that
is $(b',c) \notin E$. Therefore, we can color vertex $c$ with color
$2$ and vertex $b$ with color $1$ and derive a new $3$-coloring of
$G$ for which we have that $c_1 = c_2 = 2$ and $c_3 \ge 2$, as
desired.

\item[Case 3:]
$c_1 = 1$, $c_2 \ge 3$ and $c_3 \ge 3$: Without loss of generality
assume that $C_1 = \{a\}$. Then, there exists at least one vertex,
say $c \in C_2 \cup C_3$, which is not a neighbor of vertex $a$. In
this case, vertex $c$ can be colored with color $1$ and a new
$3$-coloring of $G$ is derived with $c_1 = c_2 = 2$ and $c_3 \ge 3$,
as desired.
\end{enumerate}
\end{proof}

\begin{lemma}
Let  $G=(V,E)$ be an outerplanar graph and let $V'$ and $V''$ be two
disjoint subsets of $V$, such that $|V'| \ge 2$ and $|V''| \ge 3$.
Then, there exist two vertices $u \in V'$ and $v \in V''$, such
that $(u,v) \notin E$.
\label{lem:outerplanar2vertices}
\end{lemma}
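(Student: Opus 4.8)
The plan is to argue by contradiction using the classical forbidden-subgraph characterization of outerplanarity. Suppose the claim fails, i.e.\ every vertex of $V'$ is adjacent in $G$ to every vertex of $V''$. Since $|V'|\ge 2$ and $|V''|\ge 3$, I would pick two vertices $u_1,u_2\in V'$ and three vertices $v_1,v_2,v_3\in V''$; because $V'$ and $V''$ are disjoint, these five vertices are pairwise distinct. By the contrary assumption all six pairs $(u_i,v_j)$ are edges of $G$, so $G$ contains a subgraph isomorphic to $K_{2,3}$ with parts $\{u_1,u_2\}$ and $\{v_1,v_2,v_3\}$.

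Next I would invoke the fact that $K_{2,3}$ is not outerplanar — it is one of the two minor-minimal obstructions to outerplanarity (the other being $K_4$). Since containing $K_{2,3}$ as a subgraph in particular yields $K_{2,3}$ as a minor, $G$ would not be outerplanar, contradicting the hypothesis. Hence some $u\in V'$ and $v\in V''$ satisfy $(u,v)\notin E$, which is exactly the statement.

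I do not expect any real obstacle here; the only point requiring a sentence of care is that the disjointness of $V'$ and $V''$ is what guarantees the five chosen vertices are distinct, so that the edges between $V'$ and $V''$ genuinely span a $K_{2,3}$ (and not a degenerate multigraph). As an alternative route, one could instead count edges: an outerplanar graph on $N$ vertices has at most $2N-3$ edges, and restricting to the bipartite subgraph on $V'\cup V''$ with all $|V'|\cdot|V''|\ge 6$ cross-edges present would violate this bound once $|V'|,|V''|$ are as assumed; but the $K_{2,3}$ argument is cleaner and I would use it.
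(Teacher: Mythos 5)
Your main argument is correct and is exactly the paper's proof: the paper disposes of this lemma in one line by noting that an outerplanar graph is $K_{2,3}$-free, which is precisely the contradiction you derive. One caveat on your side remark: the edge-counting alternative does \emph{not} work in the minimal case, since for $|V'|=2$, $|V''|=3$ you get $6$ cross-edges on $5$ vertices while the outerplanar bound allows $2\cdot 5-3=7$, so no contradiction arises — stick with the $K_{2,3}$ argument.
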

\begin{proof}
The proof follows from the fact that an outerplanar graph is
$K_{2,3}$ free.
\end{proof}

\begin{theorem}
An outerplanar graph with $n \ge 8$ vertices has
\diff{2}{n}-differential coloring if and only if it does not contain
$K_{1,n-1}$ as subgraph. \label{thm:outerplanar2diff}
\end{theorem}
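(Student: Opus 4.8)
The plan has two directions. For the \emph{only if} direction, suppose $G$ contains $K_{1,n-1}$ as a subgraph. Then some vertex is adjacent to all other $n-1$ vertices, hence becomes isolated in the complement $G^c$, so $G^c$ is disconnected. By the characterization of Leung et al.~\cite{leung1984} recalled above (a graph admits a \diff{2}{n}-differential coloring iff its complement is Hamiltonian, and in particular not if the complement is disconnected), $G$ admits no \diff{2}{n}-differential coloring.

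For the \emph{if} direction I would produce an explicit coloring through an ordered coloring $S_1 \rightarrow S_2 \rightarrow \cdots \rightarrow S_t$ in which \emph{every} $S_i$ is an independent set of $G$. The point of this normal form is that inside a block $S_i$ consecutive colors always go to non-adjacent vertices, so the internal order is irrelevant, and the only pairs of vertices whose colors differ by exactly one and that might be adjacent are the last ("exit") vertex of $S_i$ and the first ("entry") vertex of $S_{i+1}$. Hence it suffices to order the blocks so that, for each $i$, the exit vertex of $S_i$ and the entry vertex of $S_{i+1}$ form a non-edge (and, when $|S_i|\ge 2$, its entry and exit vertices are chosen distinct). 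All the required non-edges will come from $K_{2,3}$-freeness, i.e.\ from Lemma~\ref{lem:outerplanar2vertices}.

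Since $n\ge 8\ge 6$ and (by the \emph{only if} direction) we may assume $G$ has no $K_{1,n-1}$ subgraph, Lemma~\ref{lem:outerplanarcoloring} gives a $3$-coloring with classes $C_1,C_2,C_3$, each of size at least $2$; set $c_i=|C_i|$ and assume $c_1\le c_2\le c_3$, so $c_3\ge\lceil n/3\rceil\ge 3$. In the principal case $c_2\ge 3$ I would use the sequence $C_1 \rightarrow C_2 \rightarrow C_3$: Lemma~\ref{lem:outerplanar2vertices} applied to $C_1$ (size $\ge 2$) and $C_2$ (size $\ge 3$) gives a non-edge $(a,b)$ with $a\in C_1$, $b\in C_2$, and applied again to $C_2\setminus\{b\}$ (size $\ge 2$) and $C_3$ (size $\ge 3$) gives a non-edge $(b',c)$ with $b'\in C_2\setminus\{b\}$, $c\in C_3$. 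Ordering $C_1$ to end at $a$, $C_2$ to go from $b$ to $b'$ (possible as $b\ne b'$), and $C_3$ to start at $c$ yields, by the normal form above, a valid \diff{2}{n}-differential coloring.

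It remains to treat $c_1=c_2=2$, which forces $c_3=n-4\ge 4$; write $C_1=\{a_1,a_2\}$, $C_2=\{b_1,b_2\}$. Here Lemma~\ref{lem:outerplanar2vertices} cannot be invoked directly between the two size-$2$ classes, so the idea is to thread a split $C_3=X\sqcup Y$ through them, using a sequence of the shape $C_1 \rightarrow X \rightarrow C_2 \rightarrow Y$ with $|X|,|Y|$ taken $\ge 2$ (possible since $c_3\ge 4$). The adjacencies between $C_3$ and $\{a_1,a_2,b_1,b_2\}$ are severely restricted by $K_{2,3}$-freeness: $a_1,a_2$ have at most two common neighbours and likewise $b_1,b_2$, so at least $c_3-2\ge 2$ vertices of $C_3$ miss one of $a_1,a_2$ and at least $c_3-2\ge 2$ miss one of $b_1,b_2$; this abundance is exactly what is needed to choose the entry/exit vertices of $X$, $C_2$ and $Y$ consistently (picking which of $a_1,a_2$, resp.\ $b_1,b_2$, to place at each boundary according to the selected $C_3$-vertices). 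I expect the bookkeeping of this last case — especially the extremal instance $n=8$, $c_3=4$, where there is the least slack and where the case $C_1\cup C_2\cong K_{2,2}$ must be examined on its own — to be the main obstacle, and this is presumably also where the hypothesis $n\ge 8$ (rather than $n\ge 7$) is genuinely needed, since for $n=7$ the sole profile $(2,2,3)$ leaves no room.
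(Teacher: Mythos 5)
Your \emph{only if} direction and your treatment of the case $c_2\ge 3$ are correct, and the latter is essentially the paper's Case~2 with the blocks in a different order. The genuine gap is the case $c_1=c_2=2$, which you explicitly leave as ``bookkeeping'': it is not mere bookkeeping, and the block structure $C_1\rightarrow X\rightarrow C_2\rightarrow Y$ you propose can fail. The common-neighbour count only tells you that at least $c_3-2$ vertices of $C_3$ miss \emph{some} vertex of $\{b_1,b_2\}$; it does not guarantee that both $b_1$ and $b_2$ have a non-neighbour in $C_3$. If, say, $b_2$ is adjacent to all of $C_3$ (perfectly possible in an outerplanar graph on $8$ vertices without a $K_{1,7}$), then whichever of the two boundary positions of the block $C_2=\{b_1,b_2\}$ you assign to $b_2$, its $C_3$-neighbour in the ordering is adjacent to it, and no choice of $X$, $Y$ and entry/exit vertices rescues the scheme. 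Such a $b_2$ must be placed at an extreme end of the colour sequence, next to its unique non-neighbour $b_1$, which your normal form (with $C_2$ as an internal block) cannot produce.

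The paper avoids this entirely by putting the large class in the middle and peeling single vertices off the small classes: the ordering is $C_1\setminus\{a\}\rightarrow\{a\}\rightarrow\{c\}\rightarrow C_3\setminus\{c,c'\}\rightarrow\{c'\}\rightarrow\{b\}\rightarrow C_2\setminus\{b\}$, so the only non-edges required are $(a,c)$ and $(b,c')$ with $c,c'\in C_3$ distinct. Since $|C_3|\ge 4$ when $c_1=c_2=c_3-\text{slack}$ forces the tight profile, Lemma~\ref{lem:outerplanar2vertices} applies directly to $(C_1,C_3)$ and to $(C_2,C_3\setminus\{c\})$, and no non-edge between the two size-$2$ classes is ever needed. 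You should either adopt this ordering or supply a genuinely new argument for the configuration in which one vertex of $C_1\cup C_2$ dominates all of $C_3$; as it stands, the proof is incomplete.
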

\begin{proof}
Let $G=(V,E)$ be an outerplanar graph with $n \ge 8$ vertices. If
$G$ contains $K_{1,n-1}$ as subgraph, then the complement $G^c$ of
$G$ is disconnected. Therefore, $G$ does not admit a
\diff{2}{n}-differential coloring. Now, assume that $G$ does not
contain $K_{1,n-1}$ as subgraph. By
Lemma~\ref{lem:outerplanarcoloring}, it follows that $G$ admits a
$3$-coloring, in which each color set contains at least two
vertices. Let $C_i \subseteq V$ be the set of vertices with color
$i$ and $c_i=|C_i|$, for $i=1,2,3$, such that $2 \leq c_1 \le c_2
\le c_3$. We distinguish the following cases:

\begin{enumerate}
\item[Case 1:]
$c_1 = 2$, $c_2 = 2$, $c_3 \geq 4$. Since $|C_1| = 2$ and $|C_3| \ge
4$, by Lemma~\ref{lem:outerplanar2vertices} it follows that there
exist two vertices $a \in C_1$ and $c \in C_3$, such that $(a,c)
\notin E$. Similarly, since $|C_2| = 2$ and $|C_3 \setminus \{c\}|
\ge 3$, by Lemma~\ref{lem:outerplanar2vertices} it follows that
there exist two vertices $b \in C_2$ and $c' \in C_3$, such that $c
\neq c'$ and $(b,c') \notin E$; see
Fig.~\ref{fig:out2diff1}-\ref{fig:out2diff2}.

\item[Case 2:]
$c_1 \geq 2$, $c_2 \geq 3$, $c_3 \geq 3$. Since $|C_1| = 2$ and
$|C_3| \ge 3$, by Lemma~\ref{lem:outerplanar2vertices} it follows
that there exist two vertices $a \in C_1$ and $c \in C_3$, such that
$(a,c) \notin E$. Similarly, since $|C_2| \geq 3$ and $|C_3
\setminus \{c\}| \ge 2$, by Lemma~\ref{lem:outerplanar2vertices} it
follows that there exist two vertices $b \in C_2$ and $c' \in C_3$,
such that $c \neq c'$ and $(b,c') \notin E$; see
Fig.~\ref{fig:out2diff3}-\ref{fig:out2diff4}.
\end{enumerate}

\begin{figure}[t!]
  \centering
  \begin{minipage}[b]{.23\textwidth}
    \centering
    \subfloat[\label{fig:out2diff1}]
    {\includegraphics[width=\textwidth,page=3]{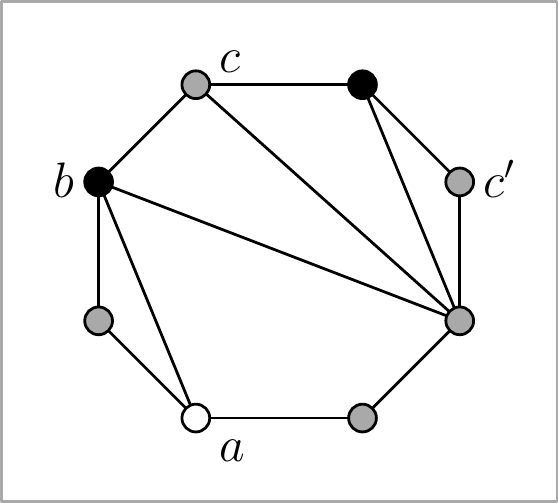}}
  \end{minipage}
  \begin{minipage}[b]{.23\textwidth}
    \centering
    \subfloat[\label{fig:out2diff2}]
    {\includegraphics[width=\textwidth,page=4]{images/outerplanar2diff}}
  \end{minipage}
  \begin{minipage}[b]{.23\textwidth}
    \centering
    \subfloat[\label{fig:out2diff3}]
    {\includegraphics[width=\textwidth,page=1]{images/outerplanar2diff}}
  \end{minipage}
    \begin{minipage}[b]{.23\textwidth}
    \centering
    \subfloat[\label{fig:out2diff4}]
    {\includegraphics[width=\textwidth,page=2]{images/outerplanar2diff}}
  \end{minipage}
  \caption{
  (a) An outerplanar graph colored with $3$ colors, white, black and grey (Case~$1$ of Thm.~\ref{thm:outerplanar2diff}), and,
  (b)~its \diff{2}{n}-differential coloring.
  (c) Another outerplanar graph also colored with $3$ colors, white, black and grey (Case~$2$ of Thm.~\ref{thm:outerplanar2diff}), and,
  (d)~its \diff{2}{n}-differential coloring.}
  \label{fig:out2diff}
\end{figure}

For both cases, consider the ordered coloring implied by the
sequence $C_1 \setminus \{a\} \rightarrow \{a\} \rightarrow \{c\}
\rightarrow C_3 \setminus \{c,c'\} \rightarrow \{c'\} \rightarrow
\{b\} \rightarrow C_2 \setminus \{b\}$. As $(a,c) \notin E$ and
$(b,c') \notin E$, it follows that the color difference between any
two vertices of $G$ is at least two. Hence, $G$ admits a
\diff{2}{n}-differential coloring.
\end{proof}

\begin{lemma}
A planar graph with $n \geq 36$ vertices, that does not contain as
subgraphs $K_{1,1,n-3}$, $K_{1,n-1}$ and $K_{2,n-2}$, admits a
$4$-coloring, in which two color sets contain at least $2$ vertices
and the remaining two at least $5$ vertices.
\label{lem:planarcoloring}
\end{lemma}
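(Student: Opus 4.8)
The plan is to invoke the Four-Colour Theorem to get a proper $4$-colouring of $G$ and then rebalance the colour classes by a sequence of local recolourings until the size condition is met. Write the class sizes in non-decreasing order as $c_1\le c_2\le c_3\le c_4$. Since $c_4\ge c_3$, the requirement ``two classes of size $\ge 5$ and two of size $\ge 2$'' is equivalent to $c_1\ge 2$, $c_2\ge 2$ and $c_3\ge 5$, and $c_1\ge 2$ already forces $c_2\ge 2$; so it suffices to first reach a $4$-colouring with $c_1\ge 2$ and then, keeping that, to reach one with $c_3\ge 5$.

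I would dispose of the second step first, since it is short. Suppose $c_1,c_2\ge 2$ but $c_3\le 4$; then $c_4=n-c_1-c_2-c_3\ge n-12\ge 24$, the class $C_4$ is independent, and $S:=C_1\cup C_2\cup C_3$ has at most $12$ vertices. For $i\in\{1,2,3\}$ let $B_i$ be the set of vertices of $C_4$ with no neighbour in $C_i$. If $|B_i|\ge 5-c_i$ for some $i$, recolour $5-c_i$ of these vertices with colour $i$: this is proper, leaves $C_1$ and $C_2$ untouched, raises $|C_i|$ to $5$ and keeps $|C_4|\ge 21$, so the new colouring is as wanted. Otherwise $|B_i|\le 4-c_i$ for all $i$, so the set $W$ of vertices of $C_4$ having a neighbour in each of $C_1,C_2,C_3$ satisfies, by inclusion--exclusion, $|W|\ge |C_4|-\sum_i|B_i|\ge |C_4|-\sum_i(4-c_i)=|C_4|-12+|S|=n-12\ge 24$; each vertex of $W$ has at least three neighbours, all in $S$, so the bipartite planar graph between $W$ and $S$ has at least $3|W|$ but at most $2(|W|+|S|)-4\le 2|W|+20$ edges, forcing $|W|\le 20$, a contradiction.

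The first step --- reaching $c_1\ge 2$ --- is the crux, and I would handle it by a case analysis in the spirit of Lemma~\ref{lem:outerplanarcoloring}. Starting from any $4$-colouring, refill any empty class from $C_4$ (independent, at least $9$ vertices), so every class is non-empty; if $c_1\ge 2$ we are done. Otherwise $C_1=\{a\}$ and, as $G$ has no $K_{1,n-1}$, $a$ has a non-neighbour $v$. If $v$ lies in a class of size $\ge 3$ and no other class is a singleton, recolouring $v$ with colour $1$ makes all four classes have size $\ge 2$; otherwise such a recolouring merely relocates a singleton and one iterates, arguing termination by showing that a configuration in which the process could cycle contains an excluded subgraph (for instance, if $C_1=\{a\}$, $C_2=\{b,v\}$, $a\not\sim v$, and $b$'s only non-neighbour is also $v$, then $a$ and $b$ are adjacent with $n-3$ common neighbours, i.e.\ $K_{1,1,n-3}$). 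The stubborn configuration is when all non-neighbours of $a$ lie in classes of size $\le 2$: then $a$ is adjacent to all but a set $R$ of at most four vertices, so $\deg(a)\ge n-5$. If $R=\{v\}$, then $a$ is universal in $G-v$, so $G-\{a,v\}$ is outerplanar on $n-2\ge 6$ vertices; unless it contains $K_{1,n-3}$ --- which I would rule out, since its apex together with $a$ would give $K_{1,n-1}$ or $K_{1,1,n-3}$ --- Lemma~\ref{lem:outerplanarcoloring} yields a $3$-colouring of it with all classes $\ge 2$, and together with the class $\{a,v\}$ this is the desired $4$-colouring. In the remaining cases I would recolour $\{a\}$ together with a non-neighbour (or a maximum independent subset of $R$) into colour $1$ and refill any depleted class from the huge class $C_4$; the only thing that can block a refill is a second almost-universal vertex, which together with $a$ yields $K_{1,1,n-3}$ or $K_{2,n-2}$, contradicting the hypothesis.

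I expect the main obstacle to be precisely this first step: the recolouring rules must be chosen so the process cannot loop forever, and one has to verify that every ``stuck'' configuration either admits the merge-and-refill move or exhibits one of the three excluded subgraphs. The assumption $n\ge 36$ enters exactly to make the numerology work --- the inequality $|W|\ge n-12>20$ in the second step, the bound $n-2\ge 6$ needed to apply Lemma~\ref{lem:outerplanarcoloring}, and the slack $\deg(a)\ge n-5$ against the excluded subgraphs' orders $n-1,n-2,n-3$ in the stubborn cases of the first step.
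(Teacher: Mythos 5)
Your reduction to the two sub-goals (first $c_1\ge 2$, then $c_3\ge 5$) is sound, and your second step is correct --- it is essentially the paper's own central device, namely bounding the set of vertices of $C_4$ that are adjacent to all three other colour classes via the planar-bipartite edge bound $2n-4$. Your handling of the sub-case $R=\{v\}$, where $a$ is universal in $G-v$, via the fact that the neighbourhood of a vertex of a planar graph induces an outerplanar graph and then Lemma~\ref{lem:outerplanarcoloring}, is a genuinely different and rather pretty shortcut. The gap is in the remaining stubborn configurations of the first step. The assertion that ``the only thing that can block a refill is a second almost-universal vertex, which together with $a$ yields $K_{1,1,n-3}$ or $K_{2,n-2}$'' is false: blocking the refill of a depleted class $\{b\}$ only requires $b$ to dominate $C_4$, and $|C_4|$ may be as small as $n-5$ at that moment, which is not enough common neighbourhood to produce either forbidden subgraph. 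Concretely, let $G$ consist of a $K_{2,n-5}$ with poles $a,b$ and middle part $C_4$, together with the edge $(a,b)$, a vertex $c$ adjacent to $a$, and two vertices $v_1,v_2$ each attached to one vertex of $C_4$. This graph is planar and contains none of $K_{1,n-1}$, $K_{2,n-2}$, $K_{1,1,n-3}$ (the pair $a,b$ has only $n-5$ common neighbours, and every common neighbourhood avoids $a,b,v_1,v_2$). Starting from the proper $4$-colouring $C_1=\{a\}$, $C_2=\{b,v_1\}$, $C_3=\{c,v_2\}$, the non-neighbours of $a$ are exactly $v_1,v_2$, both in classes of size two; your move recolours them with colour $1$, leaves $C_2=\{b\}$, and the refill of $C_2$ is blocked because $b$ dominates $C_4$ --- yet no forbidden subgraph appears.

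What rescues such configurations --- and what the paper does in its Cases 1, 3 and 6 --- is to run the edge-count rebalancing \emph{before} attacking the singletons: whenever $c_1+c_2+c_3\le 12$, at most $2(c_1+c_2+c_3)-4\le 20$ vertices of $C_4$ are forced to keep colour $4$, so one can first migrate vertices out of $C_4$ until some class among $C_1,C_2,C_3$ has at least $8$ vertices. Only then does the paper hunt for non-neighbours, and crucially the target set is then $C_3\cup C_4$ with $|C_3\cup C_4|=n-3$ when $c_1=1$ and $c_2=2$, so that ``$a$ adjacent to all of $C_3\cup C_4$ and to one of the two vertices of $C_2$'' really does exhibit a $K_{1,1,n-3}$. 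Your plan deploys the edge bound only after $c_1,c_2\ge 2$ has been secured, which is too late; with the rebalancing moved to the front, your first step essentially becomes the paper's Cases 2, 4 and 5, which still require the careful sub-case analysis that is sketched but not carried out in your proposal.
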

\begin{proof}
Let $G=(V,E)$ be a planar graph with $n \geq 36$ vertices, that does
not contain as subgraphs $K_{1,1,n-3}$, $K_{1,n-1}$ and $K_{2,n-2}$.
As $G$ is planar, it admits a $4$-coloring~\cite{diestel2000graph}.
Let $C_i \subseteq V$ be the set of vertices of $G$ with color $i$
and $c_i$ be the number of vertices with color $i$, that is
$c_i=|C_i|$, for $i=1,2,3,4$. Without loss of generality let $c_1
\le c_2 \le c_3 \le c_4$. We further assume that each color set
contains at least one vertex, that is $c_i \geq 1$, $i=1,2,3,4$. We
distinguish the following cases:

\begin{enumerate}
\item[Case 1:]
$c_1 = 1$, $c_2 = 1$, $c_3 \le 7$. Since $G$ has at least $36$
vertices, it follows that $c_4 \ge 27$. Observe that a vertex of
$C_4$ cannot be assigned a color other than $4$ if and only if it is
adjacent to at least one vertex in $C_1$, one vertex in $C_2$ and
one vertex in $C_3$. Let $C^*_4 \subseteq C_4$ be the set of such
vertices and $c^*_4=|C^*_4|$. We claim that $c^*_4 \leq 14$. To
prove the claim, we construct an auxiliary bipartite graph
$G_{\text{aux}}=(V_{\text{aux}},E_{\text{aux}})$ where
$V_{\text{aux}} = C_1 \cup C_2 \cup C_3 \cup C^*_4$ and
$E_{\text{aux}} = E \cap (C^*_4 \times (C_1 \cup C_2 \cup C_3))$.
Clearly, $G_{\text{aux}}$ is planar, since it is subgraph of $G$.
Since all vertices of $C^*_4$ have degree at least $3$, it holds
that $3c^*_4 \leq |E_{\text{aux}}|$. On the other hand, it is known
that a planar bipartite graph on $n$ vertices cannot have more that
$2n-4$ edges~\cite{lovatz}. Therefore, $|E_{\text{aux}}| \leq 2(c_1
+ c_2 + c_3 + c^*_4)-4$, which implies that $3c^*_4 \leq 2(c_1 + c_2
+ c_3 + c^*_4)-4$. As $c_1 + c_2 + c_3 \leq 9$, it follows that our
claim indeed holds. A a consequence, we can change the color of
several vertices being currently in $C_4$ and obtain a new
$4$-coloring of $G$ in which $C_4$ has exactly $14$ vertices. This
implies that the number of vertices in $C_1$, $C_2$ and $C_3$ is at
least $22$ or equivalently that one out of $C_1$, $C_2$ and $C_3$
must contain strictly more that $7$ vertices, say $C_3$. So, in the
new coloring it holds that $c_1 \geq 1$, $c_2 \ge 1$ and $c_3 \ge
8$; a case that is covered in the following.

\item[Case 2:]
$c_1 = 1$, $c_2 = 1$, $c_3 \ge 8$. Assume w.l.o.g. that $C_1 =
\{a\}$ and $C_2 = \{b\}$. Since $G$ does not contain $K_{2,n-2}$ as
subgraph, there exists at least one vertex, say $c \in C_3 \cup
C_4$, which is not neighboring either $a$ or $b$, say  $b$:  $(b,c)
\notin E$. Then, vertex $c$ can be safely colored with color $2$. In
the new coloring, it holds that $c_1 = 1$, $c_2 = 2$ and $c_3 \geq
7$; refer to Case~4.

\item[Case 3:]
$c_1 = 1$, $c_2 = 2$, $c_3 \le 6$. Since $G$ has at least $36$
vertices, it follows that $c_4 \ge 27$. Now, observe that $c_1 + c_2
+ c_3 \leq 9$. Hence, following similar arguments as in Case~1, we
can prove that there is a $4$-coloring of $G$ in which $C_4$ has
exactly $14$ vertices. So, again the number of vertices in $C_1$,
$C_2$ and $C_3$ is at least $22$ and consequently at least one out
of $C_1$, $C_2$ and $C_3$ must contain strictly more that $7$
vertices, say $C_3$. In the new coloring, it holds that $c_1 \geq
1$, $c_2 \geq 2$ and $c_3 \ge 8$; refer to Case~6.

\item[Case 4:]
$c_1 = 1$, $c_2 = 2$, $c_3 \ge 7$. Assume w.l.o.g. that $C_1 =
\{a\}$ and $C_2 = \{b,b'\}$. First, consider the case where there
exists at least one vertex, say $c \in  C_3 \cup C_4$, which is not
neighboring with vertex $a$. In this particular case, vertex $c$ can
be colored with color $1$ and a new $4$-coloring of $G$ is derived
for which it holds that $c_1 = 2$, $c_2 = 2$ and $c_3 \geq 6$, as
desired. Now consider the more interesting case, where vertex $a$ is
neighboring with all vertices of $C_3 \cup C_4$. As $G$ does not
contain $K_{1,n-1}$ as subgraph, vertex $a$ is not a neighbor of
vertex $b$ or vertex $b'$ or both. In the latter case, vertex $a$
can be colored with color $2$. Therefore, if we select two arbitrary
vertices of $C_3$ and color them with color $1$, we derive a new
$4$-coloring of $G$ for which it holds that $c_1 = 2$, $c_2 = 2$ and
$c_3 \ge 5$, as desired. On the other hand, if vertex $a$ is
neighboring exactly one out of $b$ and $b'$, say w.l.o.g. $b$, that
is $(a,b) \in E$ and $(a,b') \notin E$, then there exists a vertex
$c \in C_3 \cup C_4$ such that $(b,c) \notin E$; as otherwise
$K_{a,b,C_3 \cup C_4}$ forms a $K_{1,1,n-3}$. So, if we color vertex
$b'$ with color $1$ and vertex $c$ with color $2$, we obtain a new
coloring of $G$  for which it holds that $c_1 = 2$, $c_2 = 2$ and
$c_3 \ge 6$, as desired.

\item[Case 5:]
$c_1 = 1$, $c_2 \geq 3$, $c_3 \geq 3$. Assume w.l.o.g. that $C_1 =
\{a\}$. As $G$ does not contain $K_{1,n-1}$ as subgraph, there is a
vertex $a' \in C_2 \cup C_3 \cup C_4$ which is not neighboring with
$a$. Hence, $a'$ can be colored with color $1$. In the new coloring,
it holds that $c_1 \ge 2$, $c_2 \ge 2$, $c_3 \ge 2$; refer to
Case~$6$.

\item[Case 6:]
$c_1 \geq 2$, $c_2 \geq 2$, $c_3 \leq 4$. Since $G$ has at least
$36$ vertices, it follows that $c_4 \ge 24$. Since $c_1 + c_2 + c_3
\le 12$, similarly to Case~1 we can prove that there is a $4$-coloring
of $G$ in which $C_4$ has exactly $20$ vertices. So, the number of
vertices in $C_1$, $C_2$ and $C_3$ is at least $16$ and consequently
at least one out of $C_1$, $C_2$ and $C_3$ must contain strictly
more that $5$ vertices, say $C_3$. In the new coloring, it holds
that $c_1 \geq 2$, $c_2 \geq 2$ and $c_3 \ge 6$, as desired.
\end{enumerate}
From the above case analysis, it follows that $G$ has a
$4$-coloring, in which two color sets contain at least $2$ vertices
and the remaining two at least $5$ vertices.
\end{proof}

\begin{lemma}
Let  $G=(V,E)$ be a planar graph and let $V'$ and $V''$ be two
disjoint subsets of $V$, such that $|V'| \ge 3$ and $|V''| \ge 3$.
Then, there exists two vertices $u \in V'$ and $v \in V''$, such
that $(u,v) \notin E$.
\label{lem:planar2vertices}
\end{lemma}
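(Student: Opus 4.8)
The plan is to mirror the proof of Lemma~\ref{lem:outerplanar2vertices}, replacing the forbidden configuration $K_{2,3}$ (which witnesses non-outerplanarity) by $K_{3,3}$ (which witnesses non-planarity). I would argue by contradiction. Suppose that for every $u \in V'$ and every $v \in V''$ we have $(u,v) \in E$. Since $|V'| \ge 3$ and $|V''| \ge 3$, fix three vertices $u_1,u_2,u_3 \in V'$ and three vertices $v_1,v_2,v_3 \in V''$. Because $V'$ and $V''$ are disjoint, these six vertices are pairwise distinct.

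By the contradiction hypothesis, all nine pairs $(u_i,v_j)$, $1 \le i,j \le 3$, are edges of $G$. Hence the subgraph of $G$ induced on $\{u_1,u_2,u_3,v_1,v_2,v_3\}$ contains a copy of $K_{3,3}$ (with parts $\{u_1,u_2,u_3\}$ and $\{v_1,v_2,v_3\}$) as a spanning subgraph, and therefore $G$ itself contains $K_{3,3}$ as a subgraph.

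The only non-routine ingredient is the classical fact that $K_{3,3}$ is non-planar, together with the observation that planarity is inherited by subgraphs. Since $G$ is planar, it cannot contain $K_{3,3}$ as a subgraph, a contradiction. Consequently there must exist $u \in V'$ and $v \in V''$ with $(u,v) \notin E$, as claimed.

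I do not expect any genuine obstacle: the statement is essentially a repackaging of the non-planarity of $K_{3,3}$. The single point that warrants care is the disjointness of $V'$ and $V''$, which is exactly what guarantees that the six selected vertices are distinct, so that the cross-pairs form an honest $K_{3,3}$ on $3+3$ vertices rather than a degenerate smaller graph.
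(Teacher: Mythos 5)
Your proposal is correct and uses exactly the same argument as the paper, which simply notes that the claim follows from the fact that a planar graph cannot contain $K_{3,3}$ as a subgraph. You have merely spelled out the contrapositive in full detail, including the (correct) observation that disjointness of $V'$ and $V''$ ensures the six chosen vertices are distinct.
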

\begin{proof}
The proof follows directly from the fact that a planar graph does
not contain $K_{3,3}$ as subgraph.
\end{proof}

\begin{theorem}
A planar graph with $n \ge 36$ vertices has a
\diff{2}{n}-differential coloring if and only if it does not contain
as subgraphs $K_{1,1,n-3}$, $K_{1,n-1}$ and $K_{2,n-2}$.
\label{thm:planar2diff}
\end{theorem}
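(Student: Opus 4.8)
The plan is to mirror the structure of Theorem~\ref{thm:outerplanar2diff}, using the $4$-coloring provided by Lemma~\ref{lem:planarcoloring} in place of the $3$-coloring used in the outerplanar case.

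\textbf{Necessity.} First I would handle the ``only if'' direction, which is immediate: if $G$ contains any of $K_{1,1,n-3}$, $K_{1,n-1}$ or $K_{2,n-2}$ as a subgraph, then in the complement $G^c$ the corresponding vertices (the two apices in the first case, the single apex in the second, the two apices in the third) become isolated or, more precisely, the remaining $n-3$ (resp.\ $n-1$, $n-2$) vertices form an independent set in $G$, so in $G^c$ they induce a clique while the apex vertices have no edges to them; in each case $G^c$ is disconnected. By the result of Leung et al.~\cite{leung1984}, a disconnected complement rules out a \diff{2}{n}-differential coloring.

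\textbf{Sufficiency.} Now assume $G$ contains none of the three forbidden subgraphs. Since $n \ge 36$, Lemma~\ref{lem:planarcoloring} gives a $4$-coloring with color classes $C_1, C_2, C_3, C_4$ where, after relabeling, two of them have size at least $2$ and the other two have size at least $5$. The idea is to pick, for each ``junction'' in an ordered coloring sequence, a pair of non-adjacent vertices straddling the junction, using Lemma~\ref{lem:planar2vertices}: any two disjoint vertex subsets of sizes $\ge 3$ and $\ge 3$ contain a non-edge. Concretely, I would aim for a sequence of the form
\[
C_1 \setminus \{a\} \rightarrow \{a\} \rightarrow \{c\} \rightarrow C_3 \setminus \{c, c'\} \rightarrow \{c'\} \rightarrow \{b\} \rightarrow C_2 \setminus \{b\} \rightarrow \ldots
\]
extended appropriately to cover all four classes, choosing the four ``boundary'' non-edges so that consecutive blocks across a junction are non-adjacent. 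With four classes there are three internal junctions between classes plus the auxiliary splits inside a class; the two classes of size $\ge 5$ are the ones we want in the ``interior'' positions so that after removing up to two boundary vertices at most, the residual subsets still have size $\ge 3$, which is exactly what Lemma~\ref{lem:planar2vertices} needs. The classes of size only $\ge 2$ must be placed at the two extreme ends of the sequence, where only one boundary non-edge is required and the needed set sizes are smaller; here I would check whether Lemma~\ref{lem:planar2vertices} still applies or whether a size-$2$ end class forces a separate small argument (possibly invoking $K_{2,n-2}$-freeness to get the needed non-edge from a size-$2$ set against a large set). Once all four boundary non-edges are in hand, the ordered coloring implied by the sequence has the property that any edge of $G$ joins two vertices either in the same block (color difference $\ge 1$ is not enough—so no edges inside a block may be tight) or in blocks at distance $\ge 2$ apart; the four chosen non-edges guarantee the only potentially-tight transitions (between the singleton blocks and their neighbors) are non-edges, so every edge has color difference at least $2$.

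\textbf{Main obstacle.} The delicate part is the bookkeeping of set sizes through the sequence: each time we lift out a boundary vertex $c, c'$ from $C_3$ (or the analogous class), we shrink it, and Lemma~\ref{lem:planar2vertices} requires both sides of a junction to retain size $\ge 3$. I expect the argument to split into the same two cases as Lemma~\ref{lem:planarcoloring}'s conclusion allows—(i) the two large classes have sizes $\ge 5$ and the small ones exactly $2$, and (ii) borderline size distributions—and in each case one must verify that after removing at most two vertices per class the residual sizes are still $\ge 3$; since $5 - 2 = 3$, this is exactly tight and is presumably why Lemma~\ref{lem:planarcoloring} was stated with the bound $5$ rather than $4$. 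Handling the two end classes of size merely $\ge 2$ (where Lemma~\ref{lem:planar2vertices} does not directly apply) is where I would expect to need the explicit exclusion of $K_{2,n-2}$, paralleling how $K_{1,n-1}$-freeness was used for outerplanar graphs; I would treat those endpoints by a short direct case analysis analogous to Cases~2 and~4 of Lemma~\ref{lem:planarcoloring}.
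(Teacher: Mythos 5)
Your overall skeleton matches the paper's: necessity via the Leung et al.\ characterization, sufficiency via the $4$-coloring of Lemma~\ref{lem:planarcoloring} followed by an ordered coloring whose junctions are non-edges found with Lemma~\ref{lem:planar2vertices}. But there are two genuine gaps. First, your necessity argument is wrong for $K_{1,1,n-3}$: that subgraph occupies only $n-1$ vertices of $G$, so one vertex $z$ remains, and in $G^c$ each of the two apices may still be adjacent to $z$; the complement need not be disconnected. (Your parenthetical claim that the base vertices ``form an independent set in $G$'' is also unwarranted, since only a subgraph, not an induced subgraph, is assumed.) The correct argument, as in the paper, is that the two apices have degree at most one in $G^c$ with $z$ as their only possible common neighbor, so $G^c$ has no Hamiltonian path even when it is connected. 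Second, and more substantially, the sufficiency direction defers exactly the part that carries the weight of the proof. When both small classes have size $\ge 3$ (and the large ones $\ge 5$), three applications of Lemma~\ref{lem:planar2vertices} do produce the three junction non-edges and the argument goes through as you describe (note: a four-block sequence has three junctions, not four). But when $c_1=2$, or $c_1=2$ and $c_2\in\{2,3\}$, Lemma~\ref{lem:planar2vertices} cannot supply a non-edge out of the size-$2$ class, and one must instead exploit $K_{2,n-2}$-freeness and then distinguish whether $C_1\cup C_2$ induces a complete bipartite graph, whether a small class is completely joined to a large one, and whether the various chosen non-edge endpoints collide. The paper's Cases~2--4, with several nested sub-cases each ending in a differently ordered sequence, are precisely this analysis; your proposal names the right tool but does not execute any of it, so the claim is not yet proved for the boundary size distributions that Lemma~\ref{lem:planarcoloring} actually permits.

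One smaller point worth fixing: the ordered coloring only controls differences across block boundaries, so you must also place the two non-adjacent vertices of each junction in \emph{singleton} blocks adjacent in the sequence (as the paper does with $\{a\}\rightarrow\{c\}$, $\{c'\}\rightarrow\{d'\}$, $\{d\}\rightarrow\{b\}$); vertices within a single block $C_i\setminus\{\cdot\}$ are fine because each $C_i$ is an independent set, but your write-up should state explicitly that every consecutive pair of blocks is either a pair of independent sets from the same color class or a pair of singletons forming a chosen non-edge, since that is what guarantees the minimum difference of two.
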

\begin{proof}
Let $G=(V,E)$ be an $n$-vertex planar graph, with $n \ge 36$. If $G$
contains $K_{l,n-l}$ as subgraph, then the complement $G^c$ of $G$
is disconnected, $l=1,2\ldots, \lceil n/2 \rceil$. Therefore, $G$
has no \diff{2}{n}-differential coloring. Note, however, that since
$G$ is planar, it cannot contain $K_{l,n-l}$, $l=3,4\ldots, \lceil
n/2 \rceil$, as subgraph. On the other hand, if $G$ contains
$K_{1,1,n-3}$ as subgraph, then $G^c$ has two vertices of degree one
with a common neighbor. Hence, $G^c$ is not Hamiltonian and as a
consequence $G$ has no \diff{2}{n}-differential coloring, as well.

\begin{figure}[t!]
  \centering
  \begin{minipage}[b]{.24\textwidth}
        \centering
        \subfloat[\label{fig:2diff1}]
    {\includegraphics[width=\textwidth,page=1]{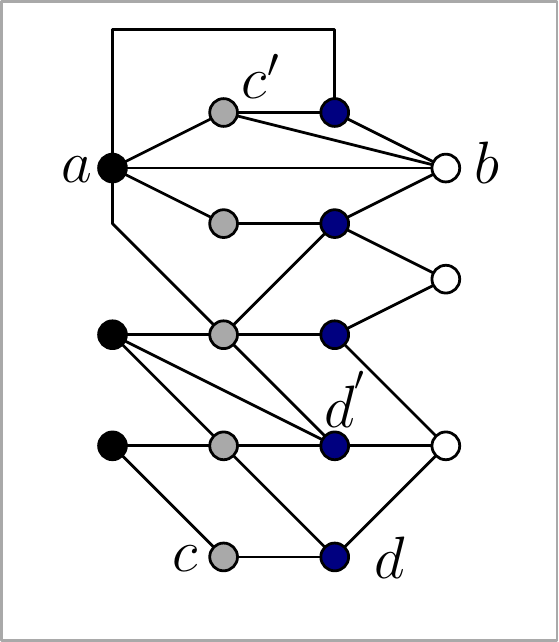}}
    \end{minipage}
  \begin{minipage}[b]{.24\textwidth}
        \centering
        \subfloat[\label{fig:2diff2}]
    {\includegraphics[width=\textwidth,page=2]{images/planar2difftight}}
    \end{minipage}
  \begin{minipage}[b]{.24\textwidth}
        \centering
        \subfloat[\label{fig:2diff3}]
    {\includegraphics[width=\textwidth,page=3]{images/planar2difftight}}
    \end{minipage}
    \begin{minipage}[b]{.24\textwidth}
        \centering
        \subfloat[\label{fig:2diff4}]
    {\includegraphics[width=\textwidth,page=4]{images/planar2difftight}}
    \end{minipage}
    \caption{
    (a)~A planar graph colored with $4$ colors, black ($C_1$), white ($C_2$), grey ($C_3$) and blue ($C_4$); Case~$1$ of Thm.~\ref{thm:planar2diff}, and,
    (b)~its \diff{2}{n}-differential coloring.
    (c)~Another planar graph also colored with $4$ colors, black, grey, blue and white; Case~$2$ of Thm.~\ref{thm:planar2diff}, and,
    (d)~its \diff{2}{n}-differential coloring.}
    \label{fig:2diff}
\end{figure}

Now, assume that $G$ does not contain as subgraphs $K_{1,1,n-3}$,
$K_{1,n-1}$ and $K_{2,n-2}$. By Lemma~\ref{lem:planarcoloring}, it
follows that $G$ admits a $4$-coloring, in which two color sets
contain at least $2$ vertices and the remaining two at least $5$
vertices. Let $C_i \subseteq V$ be the set of vertices with color
$i$ and $c_i = |C_i|$, for $i = 1,2,3,4$, such that $c_1 \leq c_2
\leq c_3 \leq c_4$, $c_1,c_2 \geq 2$ and $c_3,c_4 \geq 5$. We
distinguish the following cases:

\begin{enumerate}

\item[Case 1:]
$c_1 \geq 3$, $c_2 \geq 3$, $c_3 \geq 5$, $c_4 \geq 5$. Since $|C_1|
\geq 3$ and $|C_3| \ge 5$, by Lemma~\ref{lem:planar2vertices} it
follows that there exist two vertices $a \in C_1$ and $c \in C_3$,
such that $(a,c) \notin E$. Similarly, since $|C_2| \geq 3$ and
$|C_4| \ge 5$, by Lemma~\ref{lem:planar2vertices} it follows that
there exist two vertices $b \in C_2$ and $d \in C_4$, such that
$(b,d) \notin E$. Finally, since $|C_3 \setminus \{c\}| \geq 4$ and
$|C_4 \setminus \{d\}| \ge 4$, by Lemma~\ref{lem:planar2vertices} it
follows that there exist two vertices $c' \in C_3$ and $d' \in C_4$,
such that $c \neq c'$, $d \neq d'$ and $(c',d') \notin E$. Now
consider the ordered coloring implied by the sequence $C_1 \setminus
\{a\} \rightarrow \{a\} \rightarrow \{c\} \rightarrow C_3 \setminus
\{c,c'\} \rightarrow \{c'\} \rightarrow \{d'\} \rightarrow C_4
\setminus \{d,d'\} \rightarrow \{d\} \rightarrow \{b\} \rightarrow
C_2 \setminus \{b\}$. As $(a,c) \notin E$, $(b,d) \notin E$ and
$(c',d') \notin E$, it follows that the color difference between any
two vertices of G is at least two. Hence, $G$ has a
\diff{2}{n}-differential coloring.

\item[Case 2:]
$c_1 =2$, $c_2 \ge 4$, $c_3 \ge 5$, $c_4 \ge 5$. Since $G$ does not
contain $K_{2,n-2}$ as subgraph, there exists two vertices, say $a
\in C_1$ and $b \in C_2 \cup C_3 \cup C_4$ such that $(a,b) \notin
E$. Assume w.l.o.g. that $b \in C_2$. Since $|C_2 \setminus \{b\}|
\geq 3$ and $|C_3| \ge 5$, by Lemma~\ref{lem:planar2vertices} it
follows that there exist two vertices $b' \in C_2$ and $c \in C_3$,
such that $b' \neq b$ and $(b',c) \notin E$. Similarly, since $|C_3
\setminus \{c\}| \geq 4$ and $|C_4| \ge 5$, by
Lemma~\ref{lem:planar2vertices} it follows that there exist two
vertices $c' \in C_3$ and $d \in C_4$, such that $c' \neq c$ and
$(c',d) \notin E$. Similarly, to the previous case the ordered
coloring implied by the sequence $C_1 \setminus \{a\} \rightarrow
\{b\} \rightarrow C_2 \setminus \{b\} \rightarrow \{b'\} \rightarrow
\{c\} \rightarrow C_3 \setminus \{c,c'\} \rightarrow \{c'\}
\rightarrow \{d\} \rightarrow C_4 \setminus \{d\}$ guarantees that
$G$ has a \diff{2}{n}-differential coloring.

\item[Case 3:] $c_1 =2$, $c_2 = 3$, $c_3 \geq 5$, $c_4 \geq 5$.
Assume w.l.o.g. that $C_1=\{a,a'\}$ and $C_2=\{b,b',b''\}$. We
distinguish two sub cases:

\begin{itemize}[-]
\item The subgraph of $G$ induced by $C_1 \cup C_2$ is $K_{2,3}$,
that is $C_1 \times C_2 \subseteq E$. Since $G$ does not contain
$K_{2,n-2}$ as subgraph, there exists at least one vertex of $C_1$,
say vertex $a$, that is not neighboring with a vertex of $C_3 \cup
C_4$, say vertex $c \in C_3$, that is $(a,c) \notin E$. Since $|C_2|
= 3$ and $|C_4| \ge 5$, by Lemma~\ref{lem:planar2vertices} it
follows that there exist a vertex of $C_2$, say vertex $b$, and a
vertex of $C_4$, say vertex $d$, that are not adjacent, that is
$(b,d) \notin E$. Similarly, since $|C_3 \setminus \{c\}| \geq 4$
and $|C_4 \setminus \{d\}| \ge 4$, by
Lemma~\ref{lem:planar2vertices} it follows that there exist two
vertices $c' \in C_3$ and $d' \in C_4$, such that $c \neq c'$, $d
\neq d'$ and $(c',d') \notin E$. So, in the ordered coloring implied
by the sequence $C_1 \setminus \{a\} \rightarrow \{a\} \rightarrow
\{c\} \rightarrow C_3 \setminus \{c,c'\} \rightarrow \{c'\}
\rightarrow \{d'\} \rightarrow C_4 \setminus \{d,d'\} \rightarrow
\{d\} \rightarrow \{b\} \rightarrow C_2 \setminus \{b\}$, it holds
that the color difference between any two vertices of $G$ is greater
or equal to two. Hence, $G$ has a \diff{2}{n}-differential coloring.
\item The subgraph of $G$ induced by $C_1 \cup C_2$ is not
$K_{2,3}$. Assume w.l.o.g. that $(a,b) \notin E$. Since $|C_3| \geq
5$ and $|C_4| \ge 5$, by Lemma~\ref{lem:planar2vertices} it follows
that there exist a vertex of $C_3$, say vertex $c$, and a vertex of
$C_4$, say vertex $d$, that are not adjacent, that is $(c,d) \notin
E$. Similarly, since $|C_3 \setminus \{c\}| \geq 4$ and
$|\{a',b',b''\}| = 3$, by Lemma~\ref{lem:planar2vertices} it follows
that there exist a vertex of $C_3 \setminus \{c\}$, say vertex $c'$,
and a vertex of $\{a',b',b''\}$, say vertex $x$, that are not
adjacent, that is $(x,c') \notin E$. First consider the case where
$x=a'$. In this case, the ordered coloring implied by the sequence
$\{b'\} \rightarrow \{b''\} \rightarrow \{b\} \rightarrow \{a\}
\rightarrow \{a'\} \rightarrow \{c'\} \rightarrow C_3 \setminus
\{c,c'\} \rightarrow \{c\} \rightarrow \{d\} \rightarrow C_4
\setminus \{d\}$ guarantees that $G$ has a \diff{2}{n}-differential
coloring. Now consider the case, where $x \in \{b',b'\}$. As both
cases are symmetric, we assume w.l.o.g. that $x=b'$. In this case,
the ordered coloring implied by the sequence $\{a'\} \rightarrow
\{a\} \rightarrow \{b\} \rightarrow \{b''\} \rightarrow \{b'\}
\rightarrow \{c'\} \rightarrow C_3 \setminus \{c,c'\} \rightarrow
\{c\} \rightarrow \{d\} \rightarrow C_4 \setminus \{d\}$ guarantees
that $G$ has a \diff{2}{n}-differential coloring.
\end{itemize}

\item[Case 4:]
$c_1 =2$, $c_2 = 2$, $c_3 \ge 5$, $c_4 \ge 5$. Assume w.l.o.g. that
$C_1=\{a,a'\}$ and $C_2=\{b,b'\}$. Again, we distinguish two sub
cases:

\begin{itemize}
\item The subgraph of $G$ induced by $C_1 \cup C_2$ is $K_{2,2}$,
that is $C_1 \times C_2 \subseteq E$. Since $G$ does not contain
$K_{2,n-2}$ as subgraph, there exists at least one vertex of $C_1$,
say vertex $a$, that is not neighboring with a vertex of $C_3 \cup
C_4$, say vertex $c \in C_3$, that is $(a,c) \notin E$. Similarly,
there exists a vertex of $C_2$, say vertex $b$, that is not
neighboring with a vertex of $C_3 \cup C_4$, say vertex $w$, that is
$(b,w) \notin E$.

First, consider the case where the subgraph of $G$ induced by $C_2
\cup C_4$ is not $K_{2,c_4}$ that is $w \in C_4$ . Since $|C_3
\setminus \{c\}| \geq 4$ and $|C_4 \setminus \{w\}| \ge 4$, by
Lemma~\ref{lem:planar2vertices} it follows that there exist a vertex
of $C_3$, say vertex $c'$, and a vertex of $C_4$, say vertex $d$,
that are not adjacent, that is $c \neq c'$, $w \neq d$ and $(c',d)
\notin E$. In this case, the ordered coloring implied by the
sequence $C_1\setminus \{a\} \rightarrow \{a\} \rightarrow \{c\}
\rightarrow C_3 \setminus \{c,c'\} \rightarrow \{c'\} \rightarrow
\{d\} \rightarrow C_4\setminus\{d,w\} \rightarrow \{w\} \rightarrow
\{b\} \rightarrow C_2 \setminus \{b\}$ guarantees that $G$ has a
\diff{2}{n}-differential coloring.

Now consider the more interesting case where the subgraph of $G$
induced by $C_2 \cup C_4$ is $K_{2,c_4}$, that is $w \in C_3$. We
distinguish two sub cases:
\begin{itemize}
\item[a.] $w \neq c$.
Since $|C_3 \setminus \{c,w\}| \geq 3$ and $|C_4| \ge 5$, by
Lemma~\ref{lem:planar2vertices} it follows that there exist a vertex
of $C_3$, say vertex $p$, and a vertex of $C_4$, say vertex $q$,
that are not adjacent, that is $c \neq w \neq p$, and $(p,q) \notin
E$. Similarly, since $|C_3 \setminus \{p,w\}| \geq 3$ and $|C_4
\setminus \{q\}| \ge 4$, by Lemma~\ref{lem:planar2vertices} it
follows that there exist a vertex of $C_3$, say vertex $p'$, and a
vertex of $C_4$, say vertex $q'$, that are not adjacent, that is $p
\neq w \neq p'$, $q \neq q'$ and $(p',q') \notin E$. If $p' \neq c$,
the ordered coloring implied by the sequence $C_1\setminus \{a\}
\rightarrow \{a\} \rightarrow \{c\} \rightarrow \{p'\} \rightarrow
\{q'\} \rightarrow C_4 \setminus \{q,q'\} \rightarrow \{q\}
\rightarrow \{p\} \rightarrow C_3\setminus\{p,p',c,w\} \rightarrow
\{w\} \rightarrow \{b\} \rightarrow C_2 \setminus \{b\}$ guarantees
that $G$ has a \diff{2}{n}-differential coloring. If $p' = c$, the
ordered coloring implied by the sequence $C_1\setminus \{a\}
\rightarrow \{a\} \rightarrow \{p'\} \rightarrow \{q'\} \rightarrow
C_4 \setminus \{q,q'\} \rightarrow \{q\} \rightarrow \{p\}
\rightarrow C_3\setminus\{p,p',w\} \rightarrow \{w\} \rightarrow
\{b\} \rightarrow C_2 \setminus \{b\}$ guarantees that $G$ has a
\diff{2}{n}-differential coloring.
\item[b.] $w = c$.
Since the subgraph of $G$ induced by $C_2 \cup C_4$ is $K_{2,c_4}$
and $|C_2 \cup \{a\}| = 3$ and $|C_4| \ge 4$, by
Lemma~\ref{lem:planar2vertices} it follows that there exist a vertex
of $C_4$, say vertex $q$, not adjacent to vertex $a$, that is $(a,q)
\notin E$. Similarly, since $|C_3 \setminus \{c\}| \ge  4$ and $|C_4
\setminus \{q\}| \ge 4$, by Lemma~\ref{lem:planar2vertices} it
follows that there exist a vertex of $C_4$, say vertex $q'$, and a
vertex of $C_3$, say vertex $p'$, that are not adjacent, that is $q'
\neq q$, $p' \neq c$ and $(p',q') \notin E$. Then the ordered
coloring implied by the sequence $C_1\setminus \{a\} \rightarrow
\{a\} \rightarrow \{q\} \rightarrow C_4 \setminus \{q,q'\}
\rightarrow \{q'\} \rightarrow \{p'\} \rightarrow
C_3\setminus\{p',c\} \rightarrow \{c\} \rightarrow \{b\} \rightarrow
C_2 \setminus \{b\}$ guarantees that $G$ has a
\diff{2}{n}-differential coloring.

\end{itemize}
\item The subgraph of $G$ induced by $C_1 \cup C_2$ is not
$K_{2,2}$. Assume w.l.o.g. that $(a,b) \notin E$.

First, consider the case where the subgraph of $G$ induced by
$\{a',b'\} \cup C_3 \cup C_4$ is $K_{2,c_3+c_4}$ that is $\{a',b'\}
\times C_3 \cup C_4 \subseteq E$. Since $|\{a,a',b'\}| =  3$ and
$|C_3| \ge 4$, by Lemma~\ref{lem:planar2vertices} it follows that
there exist a vertex of $C_3$, say vertex $c$, not adjacent to
vertex a, that is $(a,c) \notin E$. Similarly since, $|\{b,a',b'\}|
=  3$ and $|C_4| \ge 4$, by Lemma~\ref{lem:planar2vertices} it
follows that there exist a vertex of $C_4$, say vertex $d$, not
adjacent to vertex b, that is $(b,d) \notin E$. Since $|C_3
\setminus \{c\}| \geq 3$ and $|C_4 \setminus \{d\}| \ge 4$, by
Lemma~\ref{lem:planar2vertices} it follows that there exist a vertex
of $C_3$, say vertex $c'$, and a vertex of $C_4$, say vertex $d'$,
that are not adjacent, that is $c \neq c'$, $d \neq d'$ and $(c',d')
\notin E$. Then the ordered coloring implied by the sequence
$C_1\setminus \{a\} \rightarrow \{a\} \rightarrow \{c\} \rightarrow
C_3 \setminus \{c,c'\} \rightarrow \{c'\} \rightarrow \{d'\}
\rightarrow C_4\setminus\{d,d'\} \rightarrow \{d\} \rightarrow \{b\}
\rightarrow C_2 \setminus \{b\}$ guarantees that $G$ has a
\diff{2}{n}-differential coloring.

Now we consider the case where the subgraph of $G$ induced by
$\{a',b'\} \cup C_3 \cup C_4$ is not $K_{2,c_3+c_4}$. There exist a
vertex of $C_3 \cup C_4$, say vertex $c$, and a vertex of
$\{a',b'\}$, say vertex $p$, that are not adjacent, that is $(p,c)
\notin E$. Assume w.l.o.g $c \in C_3$. Since $|C_3 \setminus \{c\}|
\geq 3$ and $|C_4| \ge 4$, by Lemma~\ref{lem:planar2vertices} it
follows that there exist a vertex of $C_3$, say vertex $c'$, and a
vertex of $C_4$, say vertex $d$, that are not adjacent, that is $c
\neq c'$ and $(c',d') \notin E$. If $p = b'$ the ordered coloring
implied by the sequence $\{a'\} \rightarrow \{a\} \rightarrow \{b\}
\rightarrow \{b'\} \rightarrow \{c\} \rightarrow
C_3\setminus\{c,c'\} \rightarrow \{c'\} \rightarrow \{d'\}
\rightarrow C_4 \setminus \{d'\}$ guarantees that $G$ has a
\diff{2}{n}-differential coloring. If $p = a'$ the ordered coloring
implied by the sequence $\{b'\} \rightarrow \{b\} \rightarrow \{a\}
\rightarrow \{a'\} \rightarrow \{c\} \rightarrow
C_3\setminus\{c,c'\} \rightarrow \{c'\} \rightarrow \{d'\}
\rightarrow C_4 \setminus \{d'\}$ guarantees that $G$ has a
\diff{2}{n}-differential coloring.
\end{itemize}
\end{enumerate}
From the above case analysis, it follows that $G$ is
\diff{2}{n}-differential colorable, as desired.
\end{proof}

\section{NP-completeness Results}
\label{sec:gennpcom}

In this section, we prove that the \diff{3}{2n}-differential
coloring problem is NP-complete. Recall that all graphs are
\diff{2}{2n}-differential colorable due to Lemma~\ref{lem:prilim1}.

\begin{theorem}
Given a graph $G=(V,E)$ on $n$ vertices, it is NP-complete to
determine whether $G$ has a \diff{3}{2n}-differential coloring.
\label{thm:reduction}
\end{theorem}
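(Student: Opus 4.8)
The plan is to reduce from Hamiltonicity-type problems, exploiting the structural analogue of the Leung et al.~characterization used throughout Section~\ref{sec:difflabel21}. Recall that a graph is \diff{2}{n}-differential colorable iff its complement has a Hamiltonian path; the natural generalization is that \diff{d}{kn}-differential colorability is governed by an ordering of the $n$ real vertices together with $(k-1)n$ ``slack'' colors so that consecutive chosen colors along any edge differ by at least $d$. For $k=2$ and $d=3$ we have $2n$ colors and $n$ slack positions, and an assignment amounts to interleaving the $n$ vertices among $n$ blank slots; an edge $(u,v)$ is satisfied exactly when at least two positions (blanks or vertices) lie strictly between $u$ and $v$. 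So the first step is to set up this reformulation cleanly: membership in NP is immediate (a color assignment is a polynomial-size certificate checkable in polynomial time), and the rest is NP-hardness.

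For the hardness direction I would reduce from a suitable NP-complete variant of the Hamiltonian path / Hamiltonian completion problem — most naturally, given the $k=1$ case reduces from Hamiltonian path in the complement, here I would start from \textsc{Hamiltonian Path} on an arbitrary graph $H$ on $m$ vertices and build $G$ whose complement's structure, together with the extra $n$ slack colors, forces a Hamiltonian-path-like ordering. The idea is: take $H$, let $G$ contain $H^c$ (the complement) as the ``hard core'' plus a controlled number of additional gadget vertices, chosen so that the $2n$ available colors and the requirement ``gap $\ge 3$'' exactly simulate the $k=1$, $d=2$ condition on $H^c$. Concretely, adding the right number of universal/isolated vertices and padding vertices lets one pin the slack colors into fixed positions, so that the only freedom left is the relative order of the $H$-vertices, and the gap-$\ge 3$ constraints between them collapse to the non-adjacency (in $H^c$)/adjacency (in $H$) conditions of a Hamiltonian path. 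One then argues: $G$ is \diff{3}{2n}-differential colorable $\iff$ $H$ has a Hamiltonian path.

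The two implications are handled as in the earlier theorems. For the forward direction, given a valid \diff{3}{2n}-coloring, read off the induced order on the $H$-vertices and verify that the gadget forces consecutive ones to have a large enough gap only when they are $H$-adjacent, producing the Hamiltonian path; this is where the careful bookkeeping of how many blanks can possibly sit between two consecutive real vertices (and hence which pairs are ``effectively adjacent'' in the coloring) does the work. For the converse, given a Hamiltonian path of $H$, lay down the $H$-vertices in that order with exactly the prescribed blanks and gadget vertices between them, exactly in the style of the ordered-coloring sequences $S_1 \rightarrow S_2 \rightarrow \cdots$ used in Theorems~\ref{lem:bipartite}, \ref{thm:outerplanar2diff} and \ref{thm:planar2diff}, and check every edge attains color distance at least $3$.

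The main obstacle is designing the gadget so that the $n$ slack colors are genuinely ``pinned'': I need auxiliary vertices whose mutual adjacencies and adjacencies to the $H$-block force them (in any valid coloring) into essentially one block of consecutive colors, so that the real freedom is only the permutation of the $H$-vertices and the gap constraint among them is \emph{exactly} distance $\ge 2$ in the index of that permutation — matching the $k=1$ characterization. Getting the counts right (how many padding vertices, how they connect, ensuring no unintended coloring ``cheats'' by scattering slack colors to loosen an edge constraint) is the delicate part; once the gadget is correct, both directions are routine verifications analogous to those already carried out in the paper.
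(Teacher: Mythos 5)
There is a genuine gap: your proposal is a plan for a reduction, not a reduction. The entire technical content of the hardness proof is the gadget that ``pins'' the $n$ unused colors, and you explicitly defer its construction (``the delicate part''). Worse, the mechanism you describe cannot work as stated: the $n$ extra colors in the \diff{3}{2n} problem are simply \emph{unassigned} colors, not vertices, so they have no adjacencies and cannot be pinned by any adjacency structure. The only way to control where they fall is to add actual vertices to the graph, but doing so changes both the vertex count and the color budget $2n$ simultaneously, and this bookkeeping is precisely where the proof lives. You also assert that once the slack colors alternate with the real vertices, the gap-$\ge 3$ condition collapses to ``consecutive vertices in the permutation are non-adjacent,'' i.e.\ a Hamiltonian path in the complement; but nothing in your sketch forces strict alternation rather than, say, the unused colors clumping into a block (in which case the residual condition is the \diff{3}{n} problem, not Hamiltonian path), and ruling out such ``cheating'' colorings is again the unaddressed core of the argument.

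For comparison, the paper avoids Hamiltonian Path entirely and reduces from the \diff{3}{n}-differential coloring problem, already known to be NP-complete by Leung et al. Given an $n$-vertex instance $G$, it forms $G'$ as the join of $G$ with $K_n$, so $n'=2n$ and there are $4n$ colors. The clique forces its $n$ vertices to be pairwise $\ge 3$ apart in color, and the join forces them to be $\ge 3$ from every vertex of $G$; a swap/normalization argument then shows any valid coloring of $G'$ can be rearranged so that the clique occupies colors in $[n+3,4n]$ and the $G$-vertices occupy $\{1,\dots,n\}$, whose restriction is a \diff{3}{n}-coloring of $G$. The converse direction colors the clique with $n+3, n+6, \dots, 4n$. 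If you want to salvage your approach, you would need to (i) replace the unused-color ``pinning'' with concrete added vertices, (ii) prove a normalization lemma showing where those vertices' colors must lie in \emph{every} valid coloring, and (iii) identify exactly which ordering problem on the original vertices survives --- at which point you will likely have rediscovered the paper's join-with-a-clique construction.
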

\begin{proof}
The problem is clearly in NP, since a non-deterministic algorithm
needs only to guess an assignment of distinct colors (out of $2n$
available colors) to the vertices of the graph and then it is easy
to verify in polynomial time whether this assignment corresponds to
a differential coloring of color difference at least $3$.

In order to prove that the problem is NP-hard, we employ a reduction
from the \diff{3}{n}-differential coloring problem, which is known
to be NP-complete~\cite{leung1984}. More precisely, let $G=(V,E)$ be
an instance of the \diff{3}{n}-differential coloring problem, i.e.,
graph $G$ is an $n$-vertex graph with vertex set $V = \{v_1,v_2,
\ldots, v_n\}$. We will construct a new graph $G'$ with $n' = 2n$
vertices, so that $G'$ is \diff{3}{2n'}-differential colorable if
and only if $G$ is \diff{3}{n}-differential colorable; see
Fig.~\ref{fig:generalnp}.

Graph $G'=(V',E')$ is constructed by attaching $n$ new vertices to
$G$ that form a clique; see the gray colored vertices of
Fig.~\ref{fig:genrealnp2}. That is, $V'= V \cup U$, where $U =
\{u_1,u_2, \ldots, u_n\}$ and $(u,u') \in E'$ for any pair of
vertices $u$ and $u' \in U$. In addition, for each pair of vertices
$v \in V$ and $u \in U$ there is an edge connecting them in $G'$,
that is $(v,u) \in E'$. In other words,
\begin{inparaenum}[(i)]
\item the subgraph, say $G_U$, of $G'$ induced by $U$ is complete and
\item the bipartite graph, say $G_{U \times V}$, with bipartition $V$ and $U$ is also complete.
\end{inparaenum}
Observe that $G'$ is the join of $G$ and $K_n$.

\begin{figure}[t!]
    \centering
    \begin{minipage}[b]{.32\textwidth}
        \centering
        \subfloat[\label{fig:genrealnp1}{Instance $G=(V,E)$}]
        {\includegraphics[width=\textwidth,page=1]{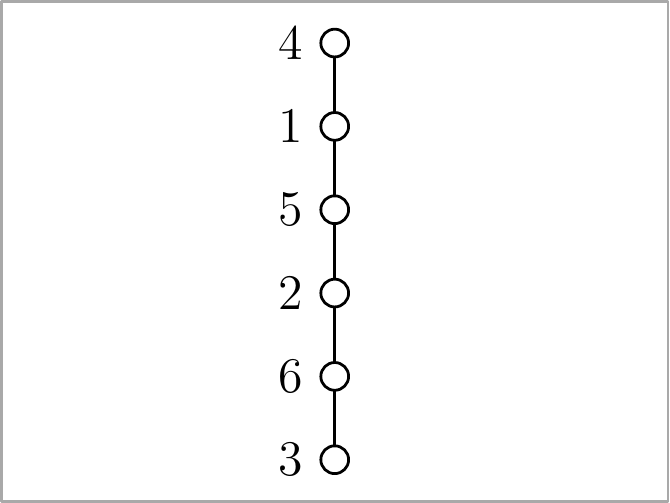}}
    \end{minipage}
    \begin{minipage}[b]{.32\textwidth}
        \centering
        \subfloat[\label{fig:genrealnp2}{Instance $G'=(V',E')$}]
        {\includegraphics[width=\textwidth,page=2]{images/generalnp}}
    \end{minipage}
    \hfill
    \caption{
    (a)~An instance of the \diff{3}{n}-differential coloring problem for $n=6$;
    (b)~An instance of the \diff{3}{2n'}-differential coloring problem constructed based on graph $G$.}
    \label{fig:generalnp}
\end{figure}

First, suppose that $G$ has a \diff{3}{n}-differential coloring and
let $l: V \to \{1, \ldots, n\}$ be the respective coloring. We
compute a coloring $l': V' \to \{1, \ldots, 4n\}$ of $G'$ as
follows:
\begin{inparaenum}[(i)]
\item $l'(v) = l(v)$, for all $v \in V' \cap V$ and
\item $l'(u_i)=n+3i$, $i=1,2,\ldots,n$.
\end{inparaenum}
Clearly, $l'$ is a \diff{3}{2n'}-differential coloring of $G'$.

Now, suppose that $G'$ is \diff{3}{2n'}-differential colorable and
let $l': V' \to \{1, \ldots, 2n'\}$ be the respective coloring
(recall that $n' = 2n$). We next show how to compute the
\diff{3}{n}-differential coloring for $G$. Without loss of
generality, let $V = \{v_1, \dots v_{n}\}$ contain the vertices of
$G$, such that $l'(v_1) < \ldots < l'(v_{n})$, and $U = \{u_1, \dots
u_{n}\}$ contains the newly added vertices of $G'$, such that
$l'(u_1) < \ldots < l'(u_{n})$. Since $G_U$ is complete, it follows
that the color difference between any two vertices of $U$ is at
least three. Similarly, since $G_{U \times V}$ is complete
bipartite, the color difference between any two vertices of $U$ and
$V$ is also at least three. We claim that $l'$ can be converted to
an equivalent \diff{3}{2n'}-differential coloring for $G'$, in which
all vertices of $V$ are colored with numbers from $1$ to $n$, and
all vertices of $U$ with numbers from $n+3$ to $4n$.

Let $U'$ be a maximal set of vertices $\{u_1, \dots, u_j\} \subseteq
U$ so that there is no vertex $v \in V$ with $l'(u_1)< l'(v) <
l'(u_j)$. If $U' = U$ and $l'(v) < l'(u_1), \forall v \in V$, then
our claim trivially holds. If $U' = U$ and $l'(v) > l'(u_j), \forall
v \in V$, then we can safely recolor all the vertices in $V'$ in the
reverse order, resulting in a coloring that complies with our claim.
Now consider the case where $U' \subsetneq U$. Then, there is a
vertex $v_k \in V$ s.t. $l'(v_k) - l'(u_j) \ge 3$. Similarly, we
define $V' = \{v_k, \ldots, v_l \in V\}$ to be a maximal set of
vertices of $V$, so that $l'(v_k) < \ldots <l'(v_l)$ and there is no
vertex $u \in U$ with $l'(v_k) < l'(u) <l'(v_l)$. Then, we can
safely recolor all vertices of $U' \cup V'$, such that:
\begin{inparaenum}[(i)]
\item the relative order of the colors of $U'$ and $V'$ remains unchanged,
\item the color distance between $v_l$ and $u_1$ is at least three, and
\item the colors of $U'$ are strictly greater than the ones of $V'$.
\end{inparaenum}
Note that the color difference between $u_j$ and $u_{j+1}$ and
between $v_{k-1}$ and $v_k$ is at least three after recoloring,
i.e., $l'(u_{j+1}) - l'(u_j) \ge 3$ and $l'(v_{k}) - l'(v_{k-1}) \ge
3$. If we repeat this procedure until $U'=U$, then the resulting
coloring complies with our claim. Thus, we obtain a
\diff{3}{n}-differential coloring $l$ for $G$ by assigning $l(v) =
l'(v), \forall v \in V$.
\end{proof}

\begin{theorem}
Given a graph $G=(V,E)$ on $n$ vertices, it is NP-complete to
determine whether $G$ has a \diff{k+1}{kn}-differential coloring.
\label{thm:reduction3}
\end{theorem}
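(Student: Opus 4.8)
The plan is to generalize the reduction used in Theorem~\ref{thm:reduction} (the case $k=2$) to arbitrary $k$. The starting point will again be the \diff{3}{n}-differential coloring problem, which is NP-complete~\cite{leung1984}. Membership in NP is immediate, exactly as before: guess an injective assignment of colors from $\{1,\dots,kn'\}$ and verify in polynomial time that every edge has color difference at least $k+1$.

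For the hardness direction, given an $n$-vertex instance $G=(V,E)$ of \diff{3}{n}-differential coloring, I would build $G'$ by taking the join of $G$ with a clique $K_{(k-1)n}$ on $(k-1)n$ fresh vertices $U=\{u_1,\dots,u_{(k-1)n}\}$, so that $n' = |V'| = kn$. The intuition is that the $(k-1)n$ extra clique vertices, each adjacent to everything, are ``forced'' to occupy a contiguous-in-spirit block of colors spaced $\ge k+1$ apart, which consumes all but $n$ of the $kn' = k^2 n$ colors and leaves a band of $n$ colors in which the copy of $G$ must be coloured with pairwise differences $\ge k+1$ — but when restricted to $n$ consecutive integers, difference $\ge k+1$ among $n$ values is only possible if the values are forced into the pattern of an arithmetic-like progression, which is where the original $G$-instance's \diff{3}{n} structure gets encoded. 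Concretely, for the easy direction, if $l:V\to\{1,\dots,n\}$ is a \diff{3}{n}-coloring of $G$, set $l'(v)=l(v)$ for $v\in V$ and $l'(u_i) = n + (k+1)i$ for $i=1,\dots,(k-1)n$; one checks $l'$ is a valid \diff{k+1}{kn'}-coloring of $G'$ (the top color used is $n + (k+1)(k-1)n \le k^2 n$, and every $U$-pair and every $U$--$V$ pair differs by $\ge k+1$).

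For the converse, suppose $G'$ has a \diff{k+1}{kn'}-coloring $l'$. Since $U\cup\{v\}$ induces a clique plus domination for each $v$, all of $U$ and all colors used on $U$ are pairwise $\ge k+1$ apart, and every $U$-color is $\ge k+1$ from every $V$-color. I would then run the same ``block-sliding'' argument as in Theorem~\ref{thm:reduction}: repeatedly locate a maximal run $U'\subseteq U$ of $U$-colors containing no $V$-color between its extremes, and the adjacent maximal run $V'\subseteq V$ of $V$-colors, and recolor $U'\cup V'$ so that (i) the internal order within $U'$ and within $V'$ is preserved, (ii) the $V'$ colors are pushed below the $U'$ colors, (iii) the gap at the new $V'$/$U'$ interface is exactly (or at least) $k+1$, and (iv) the gaps to the neighbouring blocks $u_j$/$v_k$ stay $\ge k+1$. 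Iterating sorts all of $V$ below all of $U$; then the $V$-colors, being $n$ values among the bottom $n$ positions that must still be pairwise $\ge k+1$ apart in $G$, restrict to a \diff{3}{n}-coloring of $G$ after rescaling (indeed any injective map of $V$ into a window of $n$ integers with $G$-edge-differences $\ge k+1$ yields, after the order-isomorphism of that window with $\{1,\dots,n\}$, a coloring whose $G$-edge-differences are at least... ) — here I need to be a little careful, which is the main obstacle below.

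The main obstacle is making the final ``restriction'' step airtight: I must verify that after sliding, the $n$ vertices of $V$ actually occupy colors $1,\dots,n$ (not merely some window of size $> n$), so that the order-isomorphism with $\{1,\dots,n\}$ does not shrink differences. This is forced precisely because $U$ has $(k-1)n$ vertices whose colors are pairwise $\ge k+1$ apart and all above the $V$-colors: the smallest $U$-color is then $\ge (k-1)n\cdot(k+1) - \text{slack}$... so I instead argue directly that the colors can be normalized so $l'(u_i) = n + (k+1)i$ exactly, mirroring the $k=2$ claim (``all vertices of $V$ colored with numbers from $1$ to $n$, all of $U$ from $n+3$ to $4n$''), using that $kn' = k^2 n$ is exactly $n + (k+1)(k-1)n$. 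Once $V$ sits on $\{1,\dots,n\}$, setting $l(v)=l'(v)$ gives a coloring of $G$ with all edge-differences $\ge k+1 \ge 3$, hence a \diff{3}{n}-coloring, completing the reduction. I would present the argument by explicitly noting that the proof is identical to that of Theorem~\ref{thm:reduction} with ``$3$'' replaced by ``$k+1$'' and ``$n$ new clique vertices'' replaced by ``$(k-1)n$ new clique vertices'', and only spell out the arithmetic that changes.
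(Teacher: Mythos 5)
There is a genuine gap: you chose the wrong source problem for the reduction. You reduce from \diff{3}{n}-differential coloring, but the paper (and any correct version of this argument) must reduce from \diff{k+1}{n}-differential coloring, which is also NP-complete by Leung et al. The construction itself --- joining $G$ with a clique on $(k-1)n$ fresh vertices to get $G'$ on $n'=kn$ vertices, and the arithmetic $n+(k+1)(k-1)n = k^2n = kn'$ --- matches the paper's. But your ``easy direction'' fails for $k\ge 3$: you set $l'(v)=l(v)$ for $v\in V$, verify the $U$--$U$ and $U$--$V$ pairs, and never check the edges of $G$ itself. A \diff{3}{n}-coloring of $G$ only guarantees difference $\ge 3$ across edges of $G$, which is strictly less than the required $k+1$ when $k\ge 3$, so $l'$ is not a \diff{k+1}{kn'}-coloring of $G'$.

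This is not merely a presentational slip; the equivalence you claim is false. Your converse (block-sliding) argument correctly shows that if $G'$ is \diff{k+1}{kn'}-colorable then $V$ can be normalized onto colors $1,\dots,n$ with all $G$-edge differences $\ge k+1$, i.e., $G$ is \diff{k+1}{n}-colorable. So for any $G$ that is \diff{3}{n}-colorable but not \diff{k+1}{n}-colorable (such graphs exist for $k\ge3$), your $G'$ is a no-instance while your source instance is a yes-instance. Replacing the source problem by \diff{k+1}{n}-differential coloring repairs everything: the forward direction then gives $G$-edge differences $\ge k+1$ for free, and the rest of your argument --- membership in NP, the choice of $l'(u_i)=n+(k+1)i$, and the sliding/normalization step mirroring Theorem~\ref{thm:reduction} --- goes through as you describe and coincides with the paper's proof.
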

\begin{proof}
Based on an instance $G=(V,E)$ of the \diff{k+1}{n}-differential
coloring problem, which is known to be NP-complete~\cite{leung1984},
construct a new graph $G'=(V',E')$ with $n' = kn$ vertices, by
attaching $n(k-1)$ new vertices to $G$, as in the proof of
Theorem~\ref{thm:reduction}. Then, using a similar argument as
above, we can show that $G$ has a \diff{k+1}{n}-differential
coloring if and only if $G'$ has a \diff{k+1}{kn'}-differential
coloring.
\end{proof}

The NP-completeness of $2$-differential coloring in
Theorem~\ref{thm:reduction} was about general graphs. Next, we
consider the complexity of the problem for planar graphs. Note that
from Lemma~\ref{lem:prilim2} and Lemma~\ref{lem:mcolorable}, it
follows that the $2$-differential chromatic number of a planar graph
on $n$-vertices is between $\lfloor\frac{n}{3}\rfloor + 1$ and
$\lfloor\frac{3n}{2}\rfloor$ (a planar graph is $4$-colorable). The
next theorem shows that testing whether a planar graph is
\diff{\lfloor2n/3\rfloor}{2n}-differential colorable is NP-complete.
Since this problem can be reduced to the general $2$-differential
chromatic number problem, it is NP-complete to determine the
$2$-differential chromatic number even for planar graphs.

\begin{theorem}
Given an $n$-vertex planar graph $G=(V,E)$, it is NP-complete
to determine if $G$ has a \diff{\lfloor2n/3\rfloor}{2n}-differential
coloring.
\label{thm:np-proof}
\end{theorem}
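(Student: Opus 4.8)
The plan is to build a polynomial-time reduction from a known NP-complete differential-coloring problem on general graphs to the $\chro{\lfloor 2n/3\rfloor}{2n}$-coloring problem on planar graphs, using a gadget construction analogous to the one in Theorem~\ref{thm:reduction} but now padding with an independent set rather than a clique so that planarity is preserved. Concretely, given an instance $H$ of, say, the $\diff{3}{m}$-differential coloring problem on an $m$-vertex planar graph (this restricted version is NP-complete by a result cited in the related-work section; if not available, one reduces from the general $2$-differential chromatic number problem on planar graphs, which the remark preceding the theorem asserts is the target of this reduction), I would construct a planar graph $G$ on $n$ vertices, where $n$ is a suitable multiple of $m$, by taking $H$ together with $n-m$ additional isolated vertices. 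Since isolated vertices do not participate in any edge constraint, the $k$-differential chromatic number machinery of Section~\ref{sec:preliminaries} (in particular the translation between the $k$-differential problem on a connected graph and the ordinary differential problem on a disconnected graph with $(k-1)n$ isolated vertices) lets us re-express a large target threshold $\lfloor 2n/3\rfloor$ on $n$ colors in terms of a small threshold on $m$ colors. The arithmetic has to be arranged so that $\lfloor 2n/3\rfloor = m - \lfloor m/3\rfloor$ or the analogous identity falls out of the chosen ratio $n/m$, which forces the specific constants $3$ and $2n/3$ appearing in the statement.

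The forward direction is routine: given a $\diff{d}{m}$-differential coloring of $H$ realizing the required gap, I would spread the colors out across the $2n$-color range, placing the $m$ ``real'' colors in arithmetic progression with spacing large enough to inflate the minimum gap to $\lfloor 2n/3\rfloor$, and fill the remaining slots with the isolated vertices in any order; because the isolated vertices have no constraints, only the inter-real-vertex gaps matter, and they scale by exactly the progression spacing. For the reverse direction --- the crux --- I would take an arbitrary $\diff{\lfloor 2n/3\rfloor}{2n}$-coloring of $G$ and argue that its restriction to $V(H)$, after being ``compressed'' back down to the range $\{1,\dots,m\}$ while preserving relative order, still has minimum gap at least $3$ (or whatever the source threshold is). This compression step is exactly the ``claim'' manipulation in the proof of Theorem~\ref{thm:reduction}, where one slides maximal blocks of padding vertices past maximal blocks of real vertices without destroying the gap guarantee; I expect to reuse that block-shifting argument almost verbatim, the only difference being that the padding vertices here form an independent set rather than a clique, which if anything makes the shifting easier since the padding imposes no lower bounds on its own internal gaps.

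The main obstacle will be the \textbf{tightness of the arithmetic}: the constant $\lfloor 2n/3\rfloor$ is not arbitrary --- it must simultaneously be (a) large enough that no planar graph containing a nontrivial edge is automatically colorable below it (so that the decision problem is non-degenerate), in tension with Lemma~\ref{lem:mcolorable} which already guarantees $\lfloor n/3\rfloor+1$ for all planar graphs, and (b) small enough that the compression in the reverse direction recovers precisely the source threshold $d$ on $m$ colors. Getting the block-shifting bookkeeping to yield gap $\ge d$ rather than $\ge d-1$ after compression requires careful accounting of how many padding colors can be interleaved between two consecutive real colors, and this is where the factor-of-three relationship between $n$ and $m$ must be pinned down exactly; a single off-by-one in the floor functions would break the equivalence. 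A secondary technical point is confirming that the disconnectedness introduced by the isolated vertices does not interfere --- here I would note that Lemma~\ref{lem:prilim2} and the connectedness hypotheses elsewhere are not needed for this direction, since we only use them as sanity bounds, not as steps in the equivalence. Once the constants are fixed, the NP membership is immediate (guess the coloring, verify all edge gaps in polynomial time) and planarity of $G$ is clear since adding isolated vertices preserves it, so the proof reduces to: construct $G$, verify planarity and size, prove both directions of the equivalence via the spreading argument and the block-shifting compression argument respectively.
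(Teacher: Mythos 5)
Your proposal diverges from the paper's proof in a way that breaks down at two points. First, the source problem of your reduction is not available. The paper's Theorem~\ref{thm:reduction} (and the result of Leung et al.\ it builds on) establishes hardness of $\chro{3}{n}$ and $\chro{3}{2n}$ for \emph{general} graphs only; no NP-completeness result for a fixed-threshold differential coloring problem on \emph{planar} graphs is cited or proved earlier in the paper. Your fallback --- reducing from the ``general $2$-differential chromatic number problem on planar graphs'' --- is circular: the remark preceding the theorem derives the hardness of that problem \emph{from} Theorem~\ref{thm:np-proof}, not the other way around. The paper instead reduces from planar $3$-coloring, and the whole point of the constant $\lfloor 2n'/3\rfloor$ is that it partitions the palette $\{1,\dots,2n'\}$ into three blocks of length less than the threshold, so that adjacent vertices are forced into distinct blocks and the coloring encodes a proper $3$-coloring. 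That structural observation, which is the heart of the proof, is absent from your plan.

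Second, your reverse direction fails. The block-shifting ``claim'' in Theorem~\ref{thm:reduction} works only because the padding vertices form a clique that is joined to \emph{every} vertex of $G$; this forces the padding colors to be mutually far apart and forces the real vertices' colors to be compressible to an initial segment without losing the gap guarantee. With an independent set of padding vertices there is no such forcing, and order-preserving compression does not preserve gaps. Concretely, take $H$ to be the path $u-v-w$ ($m=3$) with $n$ large, and color $u,v,w$ with $1$, $1+\lfloor 2n/3\rfloor$, $1+2\lfloor 2n/3\rfloor$; this is a valid $\chro{\lfloor 2n/3\rfloor}{2n}$-coloring of $H$ plus isolated vertices, yet compressing the restriction to $H$ down to $\{1,2,3\}$ yields gaps of $1$, not $3$. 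So even if a suitable planar source problem existed, the equivalence you propose would not hold. (Your forward ``spreading'' direction also only inflates a source gap $d$ to about $2nd/m$, which reaches $\lfloor 2n/3\rfloor$ only when $d\ge m/3$, again not matching any available hard instance.) The fix is to abandon the compression scheme entirely and argue, as the paper does, that a $\chro{\lfloor 2n'/3\rfloor}{2n'}$-coloring of the padded graph is equivalent to a $3$-coloring of the original planar graph, with the padding (the paper attaches a two-vertex path to each vertex) chosen so that $n'=3n$ and each color class fits inside one block of size $2n$.
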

\begin{proof}
The problem is clearly in NP; a non-deterministic algorithm needs
only to guess an assignment of distinct colors (out of $2n$
available colors) to the vertices of the graph and then it is easy
to verify in polynomial time whether this assignment corresponds to
a differential coloring of minimum color difference at least
$\lfloor2n/3\rfloor$.

To prove that the problem is NP-hard, we employ a reduction from the
well-known $3$-coloring problem, which is NP-complete for planar
graphs~\cite{Garey:1979:CIG:578533}.  Let $G=(V,E)$ be an instance
of the $3$-coloring problem, i.e., $G$ is an $n$-vertex planar
graph. We will construct a new planar graph $G'$ with $n'=3n$
vertices, so that $G'$ is
\diff{\lfloor2n'/3\rfloor}{2n'}-differential colorable if and only
if $G$ is $3$-colorable.

Graph $G'=(V',E')$ is constructed by attaching a path $v \rightarrow
v_1 \rightarrow v_2$ to each vertex $v \in V$ of $G$; see
Fig.~\ref{fig:instance}-\ref{fig:cunstruction}. Hence, we can
assume that $V'=V \cup V_1 \cup V_2$, where $V$ is the vertex set of
$G$, $V_1$ contains the first vertex of each 2-vertex path and $V_2$
the second vertices. Clearly, $G'$ is a planar graph on $n' = 3n$ vertices.
Since $G$ is a subgraph of $G'$, $G$ is $3$-colorable if $G'$ is
$3$-colorable. On the other hand, if $G$ is $3$-colorable, then $G'$
is also $3$-colorable: for each vertex $v \in V$, simply color its
neighbors $v_1$ and $v_2$ with two distinct colors different from
the color of $v$. Next, we show that $G'$ is $3$-colorable if and
only if $G'$ has a \diff{\lfloor2n'/3\rfloor}{2n'}-differential
coloring.

First assume that $G'$ has a \diff{\lfloor
2n'/3\rfloor}{2n'}-differential coloring and let $l: V' \to \{1,
\ldots , 2n'\}$ be the respective coloring. Let $u \in V'$ be a
vertex of $G'$. We assign a color $c(u)$ to $u$ as follows:
$c(u)=i$, if $2(i-1)n + 1 \leq l(u) \leq 2in$, $i = 1,2,3$. Since
$l$ is a \diff{\lfloor2n'/3\rfloor}{2n'}-differential coloring, no
two vertices with the same color are adjacent. Hence, coloring $c$
is a $3$-coloring for $G'$.

\begin{figure}[t!]
    \centering
    \begin{minipage}[b]{.22\textwidth}
        \centering
        \subfloat[\label{fig:instance}{Instance $G$}]
        {\includegraphics[width=\textwidth,page=1]{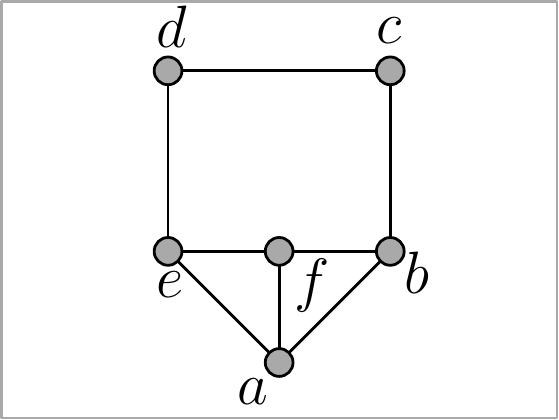}}
    \end{minipage}
    \begin{minipage}[b]{.22\textwidth}
        \centering
        \subfloat[\label{fig:cunstruction}{Graph $G'$}]
        {\includegraphics[width=\textwidth,page=2]{images/reduction}}
    \end{minipage}
    \begin{minipage}[b]{.54\textwidth}
        \centering
        \subfloat[\label{fig:coloring}{Differential coloring}]
        {\includegraphics[width=\textwidth,page=3]{images/reduction}}
    \end{minipage}
    \caption{
    (a)~An instance of the $3$-coloring problem;
    (b)~An instance of the \diff{\lfloor2n'/3\rfloor}{2n'}-differential coloring problem constructed based on graph $G$;
    (c)~The \diff{\lfloor2n'/3\rfloor}{2n'}-differential coloring of $G'$, in the case where $G$ is $3$-colorable.}
    \label{fig:reduction}
\end{figure}

Now, consider the case where $G'$ is $3$-colorable. Let $C_i
\subseteq V$ be the set of vertices of the input graph $G$  with
color $i$, $i=1,2,3$. Clearly, $C_1 \cup C_2 \cup C_3 = V$. We
compute a coloring $l$ of the vertices of graph $G'$ as follows (see
Fig.~\ref{fig:coloring}):
\begin{enumerate}[-]
  \item Vertices in $C_1$ are assigned colors from $1$ to $|C_1|$.
  \item Vertices in $C_2$ are assigned colors from $3n+|C_1|+1$ to $3n+|C_1|+|C_2|$.
  \item Vertices in $C_3$ are assigned colors from $5n+|C_1|+|C_2|+1$ to $5n+|C_1|+|C_2|+|C_3|$.
  \item For a vertex $v_1 \in V_1$ that is a neighbor of a vertex $v \in C_1$, $l(v_1) = l(v) + 2n$.
  \item For a vertex $v_1 \in V_1$ that is a neighbor of a vertex $v \in C_2$, $l(v_1) = l(v) - 2n$.
  \item For a vertex $v_1 \in V_1$ that is a neighbor of a vertex $v \in C_3$, $l(v_1) = l(v) - 4n$.
  \item For a vertex $v_2 \in V_2$ that is a neighbor of a vertexx $v_1 \in V_1$, $l(v_2) = l(v_1)+3n +|C_2|$.
\end{enumerate}
From the above, it follows that the color difference between
\begin{inparaenum}[(i)]
\item any two vertices in $G$,
\item a vertex $v_1 \in V_1$ and its neighbor $v \in V$, and
\item a vertex $v_1 \in V_1$ and its neighbor $v_2 \in V_2$,
\end{inparaenum}
is at least $2n = \lfloor\frac{2n'}{3}\rfloor$. Thus, $G'$ is
\diff{\lfloor 2n'/3\rfloor}{2n'}-differential colorable.
\end{proof}

\section{An ILP for the Maximum k-Differential Coloring Problem}
\label{sec:ILP}

In this section, we describe an integer linear program (ILP)
formulation for the maximum k-differential coloring problem. Recall
that an input graph $G$ to the maximum k-differential coloring
problem can be easily converted to an input to the maximum
1-differential coloring by creating a disconnected graph $G'$ that
contains all vertices and edges of $G$ plus $(k-1) \cdot n$ isolated
vertices. In order to formulate the maximum 1-differential coloring
problem as an integer linear program, we introduce for every vertex
$v_i \in V$ of the input graph $G$ a variable $x_i$, which
represents the color assigned to vertex $v_i$. The 1-differential
chromatic number of $G$ is represented by a variable $OPT$, which is
maximized in the objective function. The exact formulation is given
below. The first two constraints ensure that all vertices are
assigned colors from $1$ to $n$. The third constraint guarantees
that no two vertices are assigned the same color, and the forth
constraint maximizes the 1-differential chromatic number of the
graph. The first three constraints also guarantee that the variables
are assigned integer values.
$$
\begin{array}{llcllr}
\mbox{\textbf{maximize}}\:\: & OPT \\
\mbox{\textbf{subject to }} & x_i         &\le&  n  & \forall v_i\in V & \:\:\:\: \\
& x_i  &\ge&  1 \:\:& \forall v_i \in V
& \\
& |x_i - x_j|  &\ge&  1 \:\:& \forall (v_i,v_j) \in V^2
& \\
& |x_i - x_j|  &\ge&  OPT \:\:& \forall (v_i,v_j) \in E &
\end{array}
$$
Note that a constraint that uses the absolute value is of the form
$|X| \geq Z$ and therefore can be replaced by two new constraints:
\begin{inparaenum}[(i)]
\item $X + M \cdot b \ge Z$ and
\item $-X + M \cdot(1 - b) \ge Z$,
\end{inparaenum}
where $b$ is a binary variable and $M$ is the maximum value that can
be assigned to the sum of the variables, $Z + X$. That is, $M = 2n$. If $b$ is equal to
zero, then the two constraints are $X \ge Z$ and $-X + M \ge Z$,
with the second constraint always true. On the other hand, if
$b$ is equal to one, then the two constraints are $X + M \ge Z$ and
$-X \ge Z$, with the first constraint always true.

Next, we study two variants of the ILP formulation described above:
ILP-n and ILP-2n, which correspond to $k=1$ and $k=2$, and compare
them with GMap, which is a heuristic based on spectral methods
developed by Hu et al.~\cite{Gansner_Hu_Kobourov_2009_gmap}.

\begin{figure}[t!]
    \centering
    \begin{minipage}[b]{.32\textwidth}
        \centering
        \subfloat[\label{fig:figure11}{}]
        {\includegraphics[width=\textwidth]{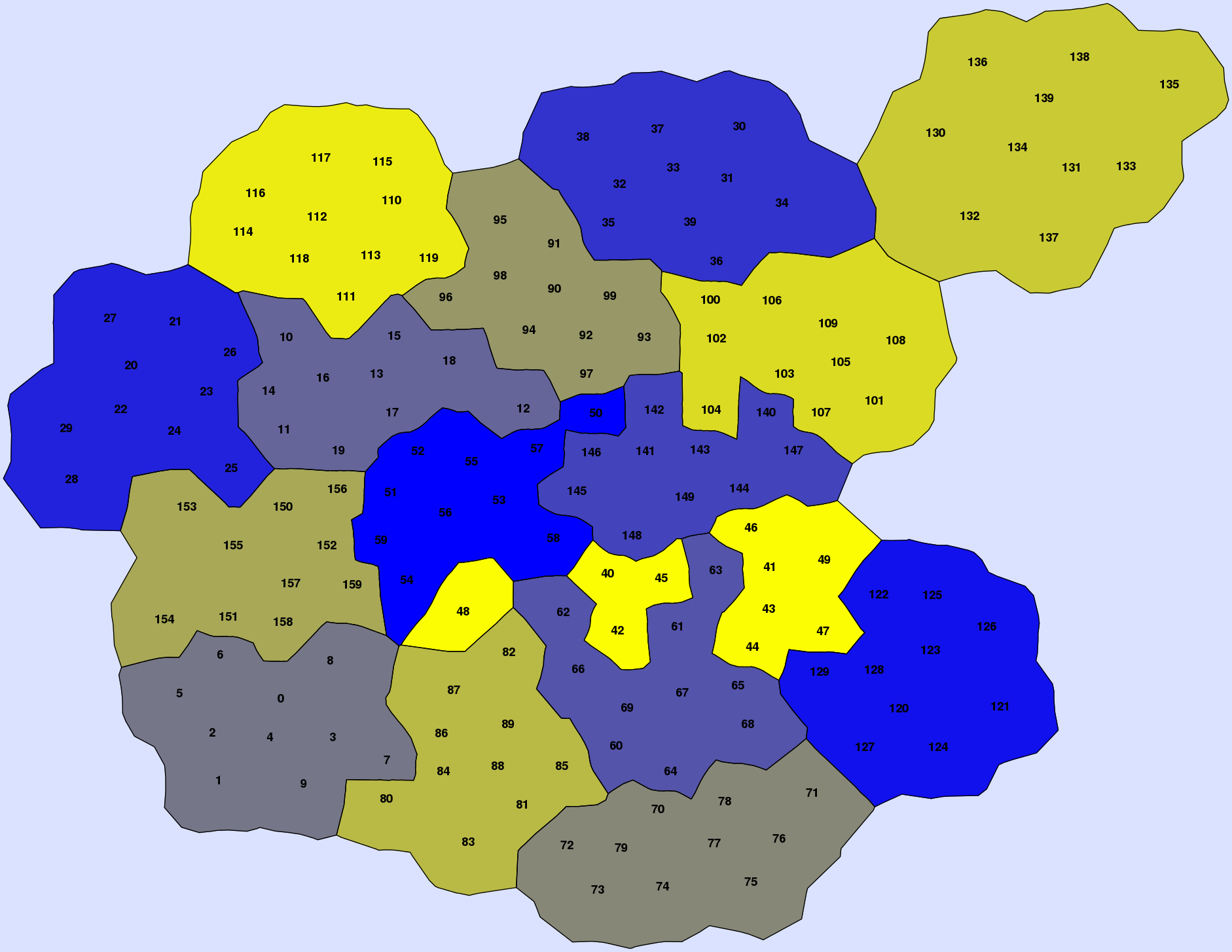}}
    \end{minipage}
    \begin{minipage}[b]{.32\textwidth}
        \centering
        \subfloat[\label{fig:figure12}{}]
        {\includegraphics[width=\textwidth]{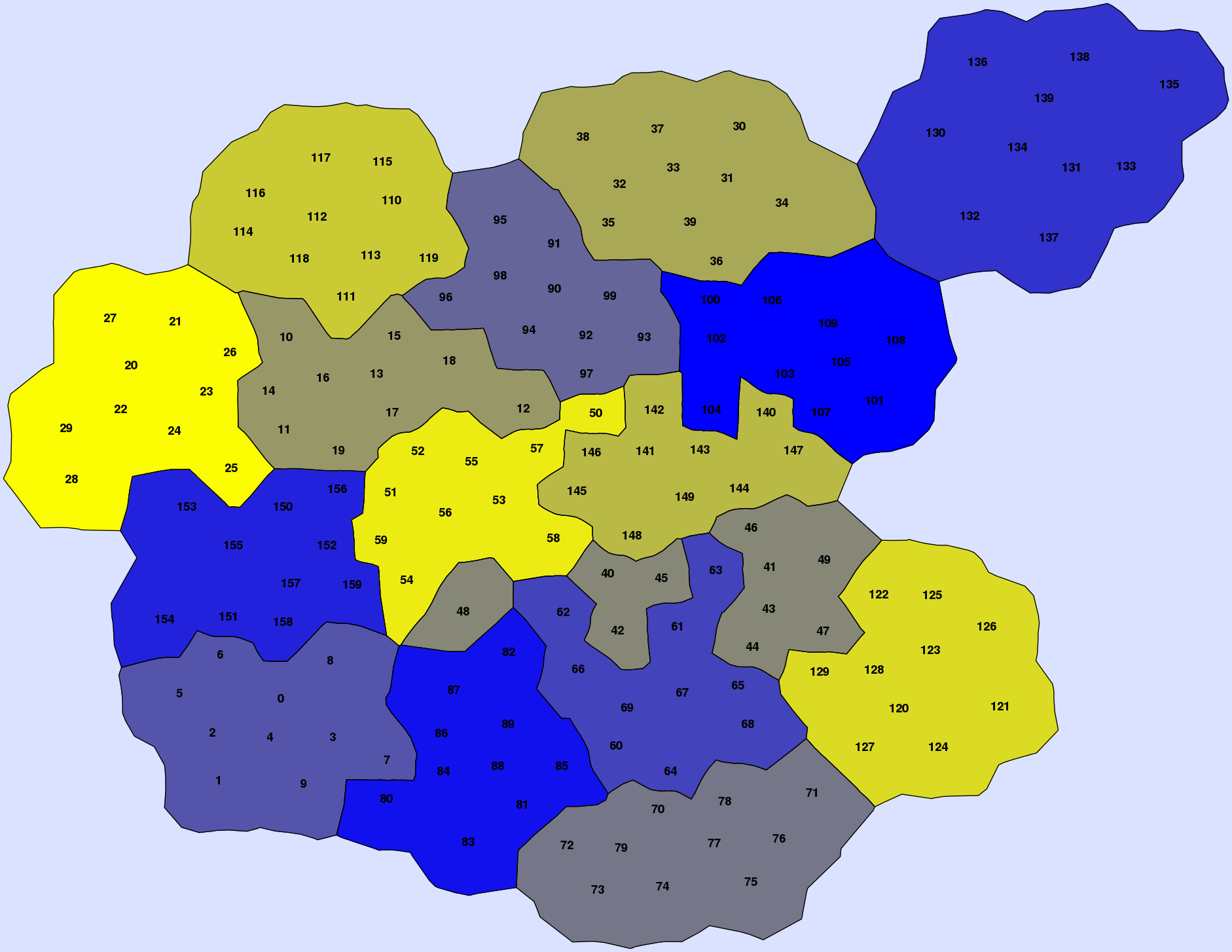}}
    \end{minipage}
    \begin{minipage}[b]{.32\textwidth}
        \centering
        \subfloat[\label{fig:figure14}{}]
        {\includegraphics[width=\textwidth]{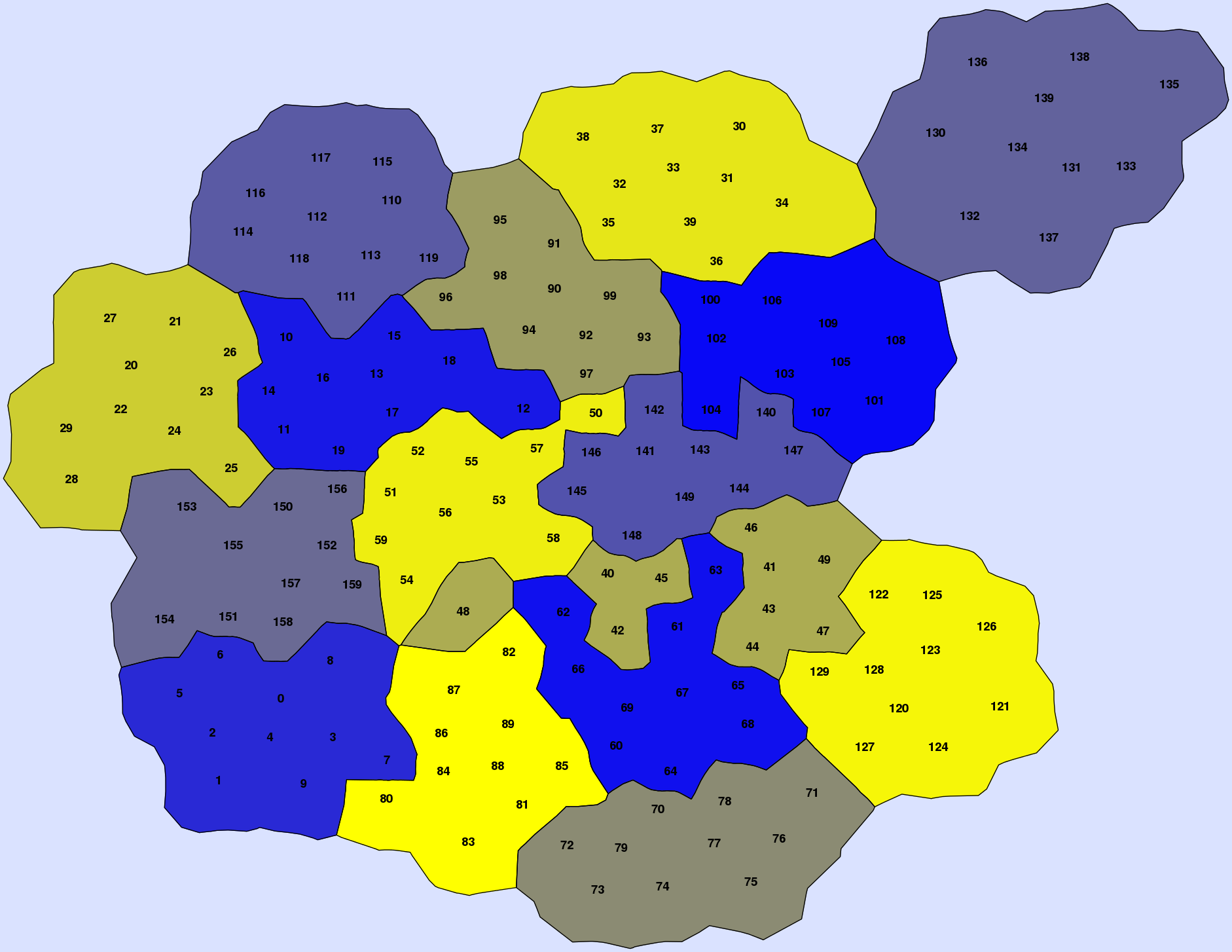}}
    \end{minipage}
    \caption{A map with 16 countries colored by: (a)~GMap~\cite{Gansner_Hu_Kobourov_2009_gmap}, (b)~ILP-n, (c)~ILP-2n.}
    \label{fig:samples}
\end{figure}

\subsection{Experiment's Setup}
We generate a collection of $1,200$ synthetic maps and analyze the
performance of ILP-n and ILP-2n,  on an Intel Core i5 1.7GHz
processor with 8GB RAM, using the CPLEX solver~\cite{cplex}. For
each map a country graph $G_c = (V_c,E_c)$ with $n$ countries is
generated using the following procedure.
\begin{enumerate}[(1)]
\item We generate $10n$ vertices and place an edge
between pairs of vertices (i,j) such that
$\lfloor\frac{i}{10}\rfloor = \lfloor\frac{j}{10}\rfloor$, with
probability 0.5, thus resulting in a graph $G$ with approximately $n$
clusters.
\item More edges are added between all pairs of vertices
with probability $p$, where $p$ takes the values $1/2, 1/4 \ldots
2^{-10}$.
\item Ten random graphs are generated for different values of $p$.
\item $G$ is used as an input to a map generating algorithm
(available as the \texttt{Graphviz}~\cite{graphviz01} function
{\texttt{gvmap}}), to obtain a map $M$ with country graph $G_c$.
\end{enumerate}

A sample map generated by the aforementioned procedure is shown in
Fig.~\ref{fig:samples}. Note that the value of $p$ determines the
``fragmentation'' of the map $M$, i.e., the number of regions in
each country, and hence, also affects the number of edges in the
country graph. When $p$ is equal to $1/2$, the country graph is a
nearly complete graph, whereas for $p$ equal to $2^{-10}$, the
country graph is nearly a tree.  To determine a suitable range for
the number of vertices in the country graph, we evaluated real world
datasets, such as those available at \url{gmap.cs.arizona.edu}. Even
for large graphs with over $1,000$ vertices, the country graphs tend
to be small, with less than $16$ countries.

\subsection{Evaluation Results} \label{sec:app1}
Fig.~\ref{fig:plots} summarizes the experimental results. Since $n$
is ranging from $5$ to $16$, the running times of both ILP-n and
ILP-2n are reasonable, although still much higher than GMap. The
color assignments produced by ILP-n and GMap are comparable, while
the color assignment of ILP-2n results in the best minimum color
distance. It is worth mentioning, though, that in the presence of
twice as many colors as the graph's vertices, it is easier to obtain
higher color difference between adjacent vertices. However, this
high difference comes at the cost of assigning pairs of colors that
are more similar to each other for non-adjacent vertices, as it is
also the case in our motivating example from the Introduction where
$G$ is a star.

\begin{figure}[t!]
    \centering
    \begin{minipage}[b]{\textwidth}
        \centering
        {\includegraphics[width=\textwidth]{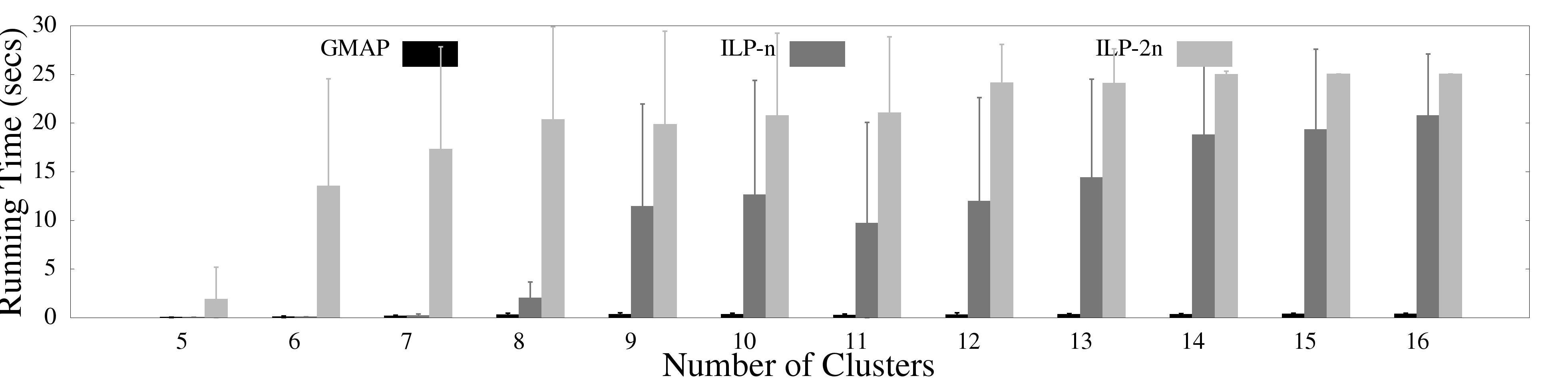}}
        {\includegraphics[width=\textwidth]{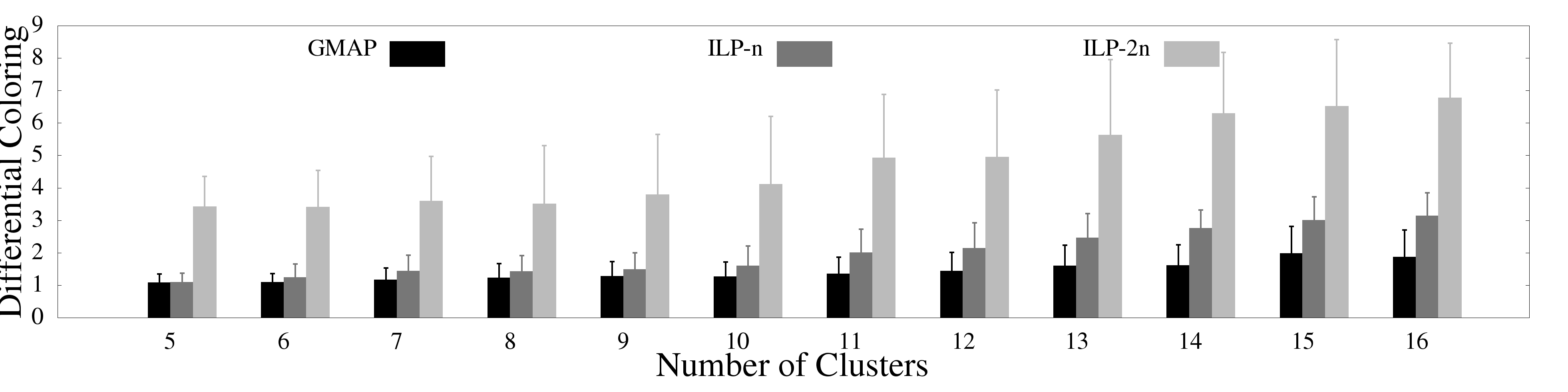}}
    \end{minipage}
    \caption{Illustration of:
    (a)~running time results for all algorithms of our experiment and
    (b)~differential coloring performance of algorithms GMap, ILP-n and ILP-2n.}
    \label{fig:plots}
\end{figure}

\section{Conclusion and Future Work}
\label{sec:conclusion}

Even though the \diff{2}{n}-differential coloring is NP-complete for
general graphs, in this paper, we gave a complete characterization
of bipartite, outerplanar and planar graphs that admit
\diff{2}{n}-differential colorings. Note that these
characterizations directly lead to polynomial-time recognition
algorithms. We also generalized the differential coloring problem
for more colors than the number of vertices in the graph and showed
that it is NP-complete to determine whether a general graph admits a
\diff{3}{2n}-differential coloring. Even for planar graphs, the
problem of determining whether a graph is
\diff{\lfloor2n/3\rfloor}{2n}-differential colorable remains
NP-hard. Several related problems are still open:

\begin{itemize}[-]
\item Is it possible to characterize which bipartite, outerplanar or
planar graphs are \diff{3}{n}-differential colorable?
\item Extend the characterizations for those planar graphs that
admit \diff{2}{n}-differential colorings to 1-planar graphs.
\item Extend the results above to \diff{d}{kn}-differential coloring
problems with larger $k > 2$.
\item As all planar graphs are \diff{\lfloor\frac{n}{3}\rfloor +
1}{2n}-differential colorable, is it possible to characterize which
planar graphs are \diff{\lfloor\frac{n}{3}\rfloor +
2}{2n}-differential colorable?
\item Since it is NP-complete to determine the $1$-differential
chromatic number of a planar graph~\cite{arxivplanar}, a natural
question to ask is whether it is possible to compute in polynomial
time the corresponding chromatic number of an outerplanar graph.
\end{itemize}

\bibliographystyle{abbrv}
\bibliography{refs}

\begin{thebibliography}{10}

\bibitem{memetic}
R.~Bansal and K.~Srivastava.
\newblock Memetic algorithm for the antibandwidth maximization problem.
\newblock {\em Journal of Heuristics}, 17:39--60, 2011.

\bibitem{arxivplanar}
M.~Bekos, M.~Kaufmann, S.~Kobourov, and S.~Veeramoni.
\newblock A note on maximum differential coloring of planar graphs.
\newblock {\em Journal of Discrete Algorithms}, 2014.
\newblock Available Online.

\bibitem{ColorBrewer2}
C.~Brewer.
\newblock {C}olor{B}rewer - {C}olor {A}dvice for {M}aps.
\newblock \url{www.colorbrewer.org}.

\bibitem{abktree}
T.~Calamoneri, A.~Massini, L.~T{\"o}r{\"o}k, and I.~Vrt'o.
\newblock Antibandwidth of complete k-ary trees.
\newblock {\em Electronic Notes in Discrete Mathematics}, 24:259--266, 2006.

\bibitem{diestel2000graph}
R.~Diestel.
\newblock {\em Graph Theory, 4th Edition}, volume 173 of {\em Graduate texts in
  mathematics}.
\newblock Springer, 2012.

\bibitem{GE07}
M.~Dillencourt, D.~Eppstein, and M.~Goodrich.
\newblock Choosing colors for geometric graphs via color space embeddings.
\newblock In M.~Kaufmann and D.~Wagner, editors, {\em Graph Drawing}, volume
  4372 of {\em LNCS}, pages 294--305. Springer Berlin Heidelberg, 2007.

\bibitem{abhamming}
S.~Dobrev, R.~Kr{\'a}lovic, D.~Pardubsk{\'a}, L.~T{\"o}r{\"o}k, and I.~Vrt'o.
\newblock Antibandwidth and cyclic antibandwidth of hamming graphs.
\newblock {\em Electronic Notes in Discrete Mathematics}, 34:295--300, 2009.

\bibitem{grasp}
A.~Duarte, R.~Mart{\'\i}, M.~Resende, and R.~Silva.
\newblock Grasp with path relinking heuristics for the antibandwidth problem.
\newblock {\em Networks}, 58(3):171--189, 2011.

\bibitem{graphviz01}
J.~Ellson, E.~Gansner, E.~Koutsofios, S.~North, and G.~Woodhull.
\newblock Graphviz and dynagraph static and dynamic graph drawing tools.
\newblock In M.~J\"unger and P.~Mutzel, editors, {\em Graph Drawing Software},
  Mathematics and Visualization, pages 127--148. Springer Berlin Heidelberg,
  2004.

\bibitem{Gansner_Hu_Kobourov_2009_gmap}
E.~Gansner, Y.~Hu, and S.~Kobourov.
\newblock Gmap: Visualizing graphs and clusters as maps.
\newblock In {\em Pacific Visualization Symposium (PacificVis), 2010 IEEE},
  pages 201--208, 2010.

\bibitem{Garey:1979:CIG:578533}
M.~R. Garey and D.~S. Johnson.
\newblock {\em Computers and Intractability: A Guide to the Theory of
  NP-Completeness}.
\newblock W. H. Freeman \& Co., New York, NY, USA, 1979.

\bibitem{hu2010}
Y.~Hu, S.~Kobourov, and S.~Veeramoni.
\newblock On maximum differential graph coloring.
\newblock In U.~Brandes and S.~Cornelsen, editors, {\em Graph Drawing}, volume
  6502 of {\em LNCS}, pages 274--286. Springer Berlin Heidelberg, 2011.

\bibitem{cplex}
{ILOG, Inc}.
\newblock {ILOG CPLEX}: High-performance software for mathematical programming
  and optimization, 2006.
\newblock See \url{http://www.ilog.com/products/cplex/}.

\bibitem{Isaak98powersof}
G.~Isaak.
\newblock Powers of hamiltonian paths in interval graphs.
\newblock {\em Journal of Graph Theory}, 27:31--38, 1998.

\bibitem{leung1984}
J.~Y.-T. Leung, O.~Vornberger, and J.~D. Witthoff.
\newblock On some variants of the bandwidth minimization problem.
\newblock {\em SIAM Journal on Computing}, 13(3):650--667, July 1984.

\bibitem{lovatz}
L.~Lov{\'a}sz, J.~Pelik{\'a}n, and K.~Vesztergombi.
\newblock {\em Discrete Mathematics: Elementary and Beyond}.
\newblock New York, NY: Springer New York, 2003.

\bibitem{miller89}
Z.~Miller and D.~Pritikin.
\newblock On the separation number of a graph.
\newblock {\em Networks}, 19(6):651--666, 1989.

\bibitem{Papadimitriou_1975}
C.~Papadimitriou.
\newblock The {NP-C}ompleteness of the bandwidth minimization problem.
\newblock {\em Computing}, 16:263--270, 1975.

\bibitem{PS86}
A.~Proskurowski and M.~Sysło.
\newblock Efficient vertex- and edge-coloring of outerplanar graphs.
\newblock {\em SIAM Journal on Algebraic Discrete Methods}, 7(1):131--136,
  1986.

\bibitem{purves2003we}
D.~Purves and R.~B. Lotto.
\newblock {\em Why we see what we do: An empirical theory of vision}.
\newblock Sinauer Associates, 2003.

\bibitem{abrsstv}
A.~Raspaud, H.~Schr{\"o}der, O.~S{\'y}kora, L.~T{\"o}r{\"o}k, and I.~Vrt'o.
\newblock Antibandwidth and cyclic antibandwidth of meshes and hypercubes.
\newblock {\em Discrete Math.}, 309(11):3541--3552, 2009.

\bibitem{Wang20091947}
X.~Wang, X.~Wu, and S.~Dumitrescu.
\newblock On explicit formulas for bandwidth and antibandwidth of hypercubes.
\newblock {\em Discrete Applied Mathematics}, 157(8):1947 -- 1952, 2009.

\bibitem{weili}
Y.~Weili, L.~Xiaoxu, and Z.~Ju.
\newblock Dual bandwidth of some special trees.
\newblock {\em Journal of Zhengzhou University (Natural Science)},
  35(3):16--19, 2003.

\bibitem{yixun2003dual}
L.~Yixun and Y.~Jinjiang.
\newblock The dual bandwidth problem for graphs.
\newblock {\em Journal of Zhengzhou University (Natural Science)}, 35(1):1--5,
  2003.

\end{thebibliography}

\end{document}